\documentclass[11pt,reqno]{amsart}

\usepackage{amsmath}
\usepackage{amssymb}
\usepackage{amsthm}
\usepackage{mathtools}
\usepackage{color}
\usepackage{graphicx}
\usepackage{psfrag}
\usepackage{marginnote}
\usepackage{lscape}

\newcommand{\geqs}{\geqslant}
\newcommand{\leqs}{\leqslant}
\newcommand{\Nb}{{\mathbb N}}
\newcommand{\Rb}{{\mathbb R}}

\newcommand{\xeq}{x^{\text{eq}}}
\newcommand{\Meq}{M^{\text{eq}}}
\newcommand{\ueq}{u^{\text{eq}}}
\newcommand{\veq}{v^{\text{eq}}}
\newcommand{\weq}{w^{\text{eq}}}
\newcommand{\Ueq}{U^{\text{eq}}}

\makeatletter
\renewcommand*\env@matrix[1][\arraystretch]{%
  \edef\arraystretch{#1}%
  \hskip -\arraycolsep
  \let\@ifnextchar\new@ifnextchar
  \array{*\c@MaxMatrixCols c}}
\makeatother


\newtheorem{theorem}{Theorem}[section]
\newtheorem{lemma}[theorem]{Lemma}
\newtheorem{prop}[theorem]{Proposition}

\newtheorem{conj}[theorem]{Conjecture}
\newtheorem{remark}[theorem]{Remark}

\theoremstyle{definition}
\theoremstyle{remark}




\begin{document}
	
\title[Dynamics of a silicosis model]{modelling silicosis: dynamics of a model with piecewise constant rate coefficients}

\author[P.R.S. Antunes]{Pedro R.S. Antunes} 
\address[P.R.S. Antunes]{Univ. Aberta, Dep. of Sciences and Technology,
  Rua da Escola Polit\'ecnica 141-7, P-1269-001 Lisboa, Portugal, and
  Univ. Lisboa, Faculty of Sciences, Group of Mathematical Physics,
Edif\'icio 6, Piso 1,  Campo Grande,
  P-1749-016 Lisboa, Portugal.}  \email{Pedro.Antunes@uab.pt}

\author[F.P. da Costa]{Fernando P. da Costa}
\address[F.P. da Costa]{Univ. Aberta, Dep. of Sciences and Technology,
  Rua da Escola Polit\'ecnica 141-7, P-1269-001 Lisboa, Portugal, and
  Univ. Lisboa, Instituto Superior T\'ecnico, Centre for Mathematical
  Analysis, Geometry and Dynamical Systems, Av. Rovisco Pais,
  P-1049-001 Lisboa, Portugal.}  \email{fcosta@uab.pt}

\author[J.T. Pinto]{Jo\~ao T. Pinto} 
\address[J.T. Pinto]{Univ. Lisboa, Instituto Superior T\'ecnico, Dep. of Mathematics
and Centre for Mathematical
  Analysis, Geometry and Dynamical Systems, Av. Rovisco Pais,
  P-1049-001 Lisboa, Portugal.}  \email{jpinto@tecnico.ulisboa.pt}

\author[R. Sasportes]{Rafael Sasportes} 
\address[R. Sasportes]{Univ. Aberta, Dep. of Sciences and Technology,
  Rua da Escola Polit\'ecnica 141-7, P-1269-001 Lisboa, Portugal, and
  Univ. Lisboa, Instituto Superior T\'ecnico, Centre for Mathematical
  Analysis, Geometry and Dynamical Systems, Av. Rovisco Pais,
  P-1049-001 Lisboa, Portugal.}  \email{rafael.sasportes@uab.pt}

\thanks{Research partially supported by Funda\c{c}\~ao para a Ci\^encia e a Tecnologia (Portugal) 
through project CAMGSD UID/04459/2020.}
\thanks{\emph{Corresponding author:} F.P. da Costa}

\date{September 2, 2021}

\subjclass{Primary 34D20, 15A18; Secondary 92C50}

\keywords{Coagulation--fragmentation--death equations, model of silicosis, local stability of equilibria}

\begin{abstract}
We study the dynamics about equilibria of an infinite dimension coagulation-fragmentation-death model for
the silicosis disease mechanism introduced recently by da Costa, Drmota, and Grinfeld (2020) \cite{cdg}
in the case where the rate coefficients are piecewise constant.

\end{abstract}


\maketitle
\section{Introduction}\label{sec1}


Silicosis is an incurable, long-term lung disease, caused by breathing in dust that 
contains crystalline silica, which is commonly found in sand, rock, and mineral 
ores like quartz. Artificial stone containing high levels of silica can also become 
dangerous for workers manipulating it; see  \cite{arti} for a recent review.  

We give a brief description of the processes involved in the lungs immune system's 
response to the invasion by harmful silica dust particles.

When silica dust particles reach the lungs they trigger a response from the 
alveolar macrophages, either through chemotaxis or by chance encounters. 
The next step is the engulfment and removal of the pathogens and cell 
debris by the alveolar macrophages, this process is known as phagocytosis. 
In the lungs, three different populations of macrophages exist, including 
airway, alveolar, and interstitial macrophages. Alveolar macrophages are 
situated on the inner surface of the lung, and they account for 55\% of the 
lung immune cells.

There is a diverse set of pathologies associated with silica exposure, so it 
seems unlikely that there is a single common mechanism responsible for all 
of the possible diseases. The exact sequence of events (from silica inhalation 
to disease) is unknown, but it is generally accepted that the alveolar 
macrophage plays a relevant role. 
Upon contact, the alveolar macrophage will bind to the silica and begin to
 engulf the particle. If the alveolar macrophage survives the silica encounter,
  it will likely migrate out of the lungs to either the proximal lymph nodes or 
  through the mucosal-ciliary escalator and eventually out of the respiratory 
  tract. If the alveolar macrophage stays in the lung it will migrate to the 
  interstitial space and become an activated interstitial macrophage  that 
  could contribute directly to worsen the disease \cite{hamilton}.
 Although the reasons for the underlying mechanism are not clear, 
silica particles are toxic to the macrophages \cite{gilberti} and can lead to their 
death. 
If this happens while the macrophages are still in the lungs,  the silica particles 
are released back into the respiratory system.

\medskip

The probability for a given macrophage already containing $i$ particles of silica
 to engulf an additional particle typically decreases with $i$ and in the model in \cite{tran} a maximum load capacity of 
$n_\text{max}<\infty$
is assumed \textit{a priori}. In \cite{cdg} this restriction was not explicitly considered, being 
the existence of an effective upper bound of the silica particles' load of the macrophages left as a consequence of the 
assumptions upon the rate coefficients.
Because of the toxicity of silica particles to the macrophages referred to above, 
macrophages with a higher load of silica particles will die 
at a higher rate. Moreover, the ability of the 
macrophages to migrate through the mucociliary escalator is  impaired by an increase 
in their load of silica particles. It is the balance of these processes that leads to the
mathematical model in \cite{tran} and that we also consider here (and was already considered in \cite{cps}) 
with the changes introduced in \cite{cdg}.

\medskip

Let $M_i = M_i(t)$ be the concentration of macrophages which contain $i$
silica particles (we will refer to it as the $i$-th cohort) at time $t$,  $x=x(t)$ be the
concentration of silica particles, and $r$  the
rate of supply of new (with no silica particles) macrophages. Following the model considered in \cite{tran}, we obtain the
 equations for the mechanism described above:
\begin{align}
    \frac{dM_0}{dt} & = r - k_0 x M_0 - (p_0+q_0) M_0,\label{M0eq1}\\
    \frac{dM_i}{dt} & = k_{i-1} x M_{i-1} - k_i x M_i -(p_i+q_i) 
    M_i, \;\; i \geqslant 1, \label{Mieq2}
\end{align}
where  $k_i$ is the rate of phagocytosis of a silica particle by a macrophage already containing
$i$ particles, $p_i$, is the transfer rate of macrophages in
the $i$-th cohort to the mucociliary escalator, i.e. the rate of
their removal from the pulmonary alveoli together with their quartz load, and $q_i$ is the
rate of death of the macrophages in the $i$-th cohort which results in the release of the
quartz burden back into the lungs.  As stated above the model in \cite{cdg}, unlike the one in  \cite{tran},  
does not impose an upper limit on  the
number $i$ of quartz particles a macrophage can contain, the existence of such a
load capacity will be a consequence of the assumptions on the rate coefficients $k_i$ and $q_i$.

\medskip

The following governing equation for the evolution of the concentration of
silica particles in the system was considered
 in \cite{cdg}, under the assumption of an inhalation rate $\alpha$, 
valid under the same assumption about the validity of the
mass action law used to obtain the equations for the $M_i$:
\begin{equation}
  \frac{dx}{dt} = \alpha - x \sum_{i=0}^\infty  k_i M_i + \sum_{i=0}^\infty q_i i M_i.  \label{xeq3}
\end{equation}
The second term in the right-hand side models the decrease in the concentration of free silica particles due to their
ingestion by macrophages, and the third term represents their increase due to them being
released into the lungs when macrophages die.
A kinetic scheme of the processes modelled by the rate equations \eqref{M0eq1}--\eqref{xeq3} 
is  presented in Figure~\ref{fig1}, \cite{cps}.

%
%
%
\begin{figure}[!h]
	\includegraphics[scale=1.05]{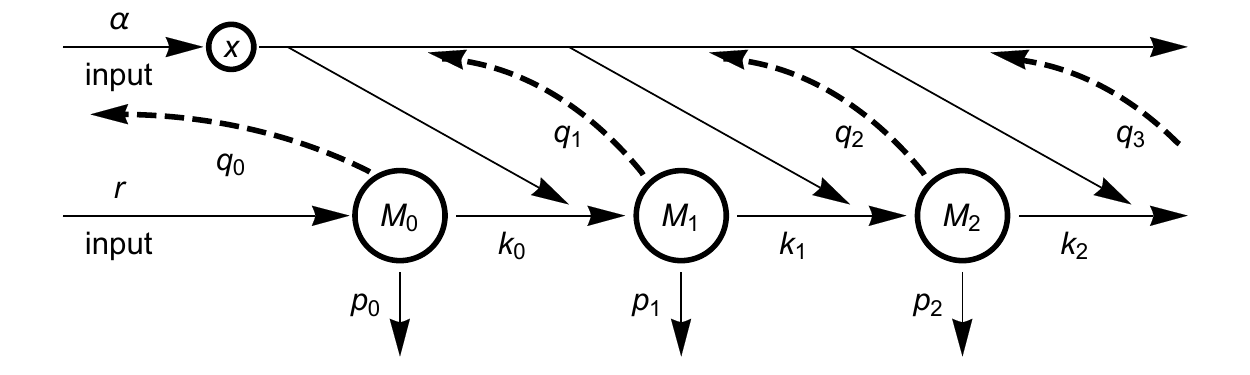}
	\caption{Reaction scheme of the model considered in this paper \cite{cps}. The input
		rates of quartz and of macrophages with no quartz
		particles are $\alpha$ and $r$, respectively. The concentration of free
		quartz particles and of macrophages containing $j$ quartz
		particles are represented by $x$ and $M_j$, respectively. 
		Macrophages $M_j$ can be destroyed, releasing $j$ quartz particles (dashed lines),
		or they can be removed by rising in the mucocilliary escalator (vertical downward
		lines), or they can ingest an additional quartz particle
		becoming an $M_{j+1}$ macrophage (horizontal rightward arrows).}\label{fig1}
\end{figure}
%
%
%

\medskip

For the functional setting in which to study \eqref{M0eq1}--\eqref{xeq3} we consider the set of
 elements \(y=(y_n)=(x,M_0,M_1,\dots)\)  of \(\mathbb{R}^\mathbb{N}\) defined by
\[ X=\{y=(y_n): \|y\|<\infty\}\]
where 
\[
\|y\|:=|x|+\sum_{i=0}^\infty (i+1)|M_i|=|x|+|((i+1)M_i)_{i=0,1,\dots}|_{\ell^1}.
\]
It is clear that $(X, \|\cdot\|)$ is a Banach space (and a subspace of $\ell^1$).
We say that \(y\geqslant 0\) if and only if \(y\in(\mathbb{R}_0^+)^\mathbb{N}\) and we denote the nonnegative cone of \(X\) by \(X_+:=\{y\in X :  y\geqslant 0\}.\)  

\medskip

From a biological point of view we are only interested in nonnegative solutions  of 
\eqref{M0eq1}--\eqref{xeq3}, $y(t)\geqs 0$ for all $t\geqs 0$. If $y(t)\in X_+$ then
the quantity
$\|y(t)\|$ represents the total amount of particles (macrophages cells and silica particles inside and outside the macrophages)
per unit volume at time $t$, and so working in $X_+$ corresponds to consider solutions of  \eqref{M0eq1}--\eqref{xeq3} with
finite amount of particles per unit volume. 

\medskip

Existence, uniqueness, continuous dependence and semigroup property of solutions to the Cauchy 
problem for the  infinite dimensional system of ordinary differential equations \eqref{M0eq1}--\eqref{xeq3} were studied
in \cite{cps}. Aspects of the structure of equilibria were analyzed in \cite{cdg}. In this paper we consider 
aspects of the long time behaviour of solutions for the system with
the following class of
piecewise constant coefficients introduced and studied in \cite[Section 3.1]{cdg}:
\begin{equation}
k_i \equiv k, \qquad
p_i = \begin{cases} 1 & \hbox{if $i \leq N,$}\\ 0 & \hbox{if $i \geq N+1,$}\end{cases}
\quad\hbox{and}\quad
q_i = \begin{cases} 0 & \hbox{if $i \leq N,$}\\ 1 & \hbox{if $i \geq N+1,$} \end{cases}\label{coef}
\end{equation}
for some fixed positive integer $N.$

\medskip

With these coefficients system \eqref{M0eq1}--\eqref{xeq3} becomes
\begin{align}
    \dfrac{dM_0}{dt} & ~= r - k x M_0 - M_0,\nonumber \\
    \dfrac{dM_i}{dt} & ~= k x M_{i-1} - k x M_i - M_i, \;\; i \geqslant 1,\label{syst2}  \\
    \dfrac{dx}{dt} & ~= \alpha - kx \sum_{i=0}^\infty   M_i + \sum_{i=N+1}^\infty i M_i, \nonumber
\end{align}

\noindent
and the structure of its equilibria is completely understood and was proved in
Propositions 1 and 2 of \cite{cdg}: 
\begin{prop}\label{prop:structure}
For all $N\in\Nb$ and $ k >0$, there exists a unique $\mu^*>0$ such that \eqref{syst2} has:
\begin{enumerate}
\item no equilibria if $\alpha/r > \mu^*$,\\
\item exactly one equilibrium if $\alpha/r = \mu^*$,\\
\item exactly two equilibria if $\alpha/r \in (0, \mu^*).$
\end{enumerate}
\end{prop}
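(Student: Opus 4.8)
The plan is to compute the equilibria explicitly by solving the stationary system, reduce everything to a single scalar equation in one unknown, and then analyze that equation's solvability.

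First I would set all time derivatives in \eqref{syst2} to zero. Writing $u := kx$ for the equilibrium value of $kx$, the equations for the cohorts become a linear recursion: $r - u M_0 - M_0 = 0$ gives $M_0 = r/(1+u)$, and $u M_{i-1} - u M_i - M_i = 0$ gives $M_i = \frac{u}{1+u} M_{i-1}$ for $i \geq 1$. Hence $M_i = \frac{r}{1+u}\left(\frac{u}{1+u}\right)^i$ for all $i \geq 0$; note this is a geometric sequence with ratio $\theta := u/(1+u) \in (0,1)$ for any $u > 0$, so the series $\sum M_i$ and $\sum i M_i$ converge and $y$ lies in $X$ automatically — there is no load-capacity constraint to worry about, consistent with the remark in the text. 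Summing, $\sum_{i=0}^\infty M_i = r/(1+u)\cdot 1/(1-\theta) = r$. So $kx\sum M_i = u r$. For the release term, $\sum_{i=N+1}^\infty i M_i = \frac{r}{1+u}\sum_{i=N+1}^\infty i\theta^i$, which is an explicit rational-times-$\theta^{N}$ expression in $u$; call it $r\,g_N(u)$. The third equilibrium equation then reads $\alpha - u r + r\, g_N(u) = 0$, i.e.
\[
\frac{\alpha}{r} = u - g_N(u) =: \Phi_N(u).
\]
Thus equilibria of \eqref{syst2} correspond bijectively to solutions $u > 0$ of $\Phi_N(u) = \alpha/r$, with $x = u/k$ and $M_i$ as above.

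The whole proposition now reduces to a one-variable claim: the function $\Phi_N:(0,\infty)\to\Rb$ satisfies $\Phi_N(0^+) = 0$, $\Phi_N$ is strictly positive for small $u$, it increases, attains a unique interior maximum $\mu^* := \max_{u>0}\Phi_N(u) > 0$, and then decreases. Granting this shape, a horizontal line at height $\alpha/r$ meets the graph in zero points if $\alpha/r > \mu^*$, one point if $\alpha/r = \mu^*$, and exactly two points if $0 < \alpha/r < \mu^*$ — which is precisely cases (1)--(3). (One must also check $\Phi_N$ does not return to $0$ or go negative on the decreasing branch in a way that would create spurious extra crossings; in fact one expects $\Phi_N(u)\to +\infty$ is false — rather $g_N(u)$ grows linearly in $u$ as $u\to\infty$ because $\theta\to 1$, so the decreasing branch behaviour needs care.) The key computation is therefore the derivative $\Phi_N'(u) = 1 - g_N'(u)$: I would get a closed form for $g_N$ by summing $\sum_{i\geq N+1} i\theta^i$, substitute $\theta = u/(1+u)$, differentiate, and show $\Phi_N'$ has exactly one sign change from $+$ to $-$ on $(0,\infty)$ — equivalently that $g_N'(u) = 1$ has a unique root and $g_N' < 1$ before it, $g_N' > 1$ after. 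This is where the hypothesis that the coefficients are piecewise constant with the single threshold $N$ does the work: it makes $g_N$ elementary enough to carry this out.

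The main obstacle I anticipate is establishing the unique sign change of $\Phi_N'$ (equivalently the concavity-like behaviour of $\Phi_N$, or at least unimodality) rigorously for every $N$, since $g_N'(u)$ is a rational function whose numerator is a polynomial in $u$ of degree growing with $N$; showing it crosses the value $1$ exactly once may require a monotonicity argument on an auxiliary quantity (for instance, showing $u\mapsto g_N'(u)$ is itself monotone, or rewriting the equation $g_N'(u)=1$ after clearing denominators and using Descartes' rule of signs or an induction on $N$). Since this is exactly the content of Propositions 1 and 2 of \cite{cdg}, I would follow their analysis for this step and simply record the conclusion, then assemble cases (1)--(3) from the graph of $\Phi_N$ as described.
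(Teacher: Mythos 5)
Your proposal is correct and follows essentially the same route as the paper: solving the stationary cohort equations to get the geometric profile \(M_i=\frac{r}{1+u}\bigl(\frac{u}{1+u}\bigr)^i\), summing the series to reduce everything to the scalar bifurcation equation \(\alpha/r=\Phi_N(u)\) (your \(\Phi_N(u)\) is exactly the paper's \(\mathcal{F}_{N,k}(x)\) in the variable \(u=kx\), and your \(\theta\) is its variable \(y\)), and then counting intersections via unimodality of the bifurcation function, with the one delicate step — the single sign change of the derivative, i.e.\ the sign analysis of \(p_N\) — deferred to Propositions 1 and 2 of \cite{cdg}, precisely as the paper itself does. The only point you flag but leave open, the behaviour of the decreasing branch, is settled by noting \(\Phi_N(u)\to 0\) as \(u\to\infty\) (equivalently \(\widetilde{\mathcal{F}}_N(y)\to 0\) as \(y\to 1^-\)), which guarantees exactly two crossings for every \(\alpha/r\in(0,\mu^*)\).
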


\medskip

As was proved in \cite{cdg} and will be recalled below, each
equilibrium solution of  \eqref{syst2}, $(x^\text{eq}, M_0^\text{eq}, M_1^\text{eq}, M_2^\text{eq}, \ldots),$
can be identified by its $x$ component, and Proposition~\ref{prop:structure}  can be graphically 
depicted by the bifurcation diagram presented in Figure~\ref{figbif}.

%
%
%
\begin{figure}[!h]
	\includegraphics[scale=1.2]{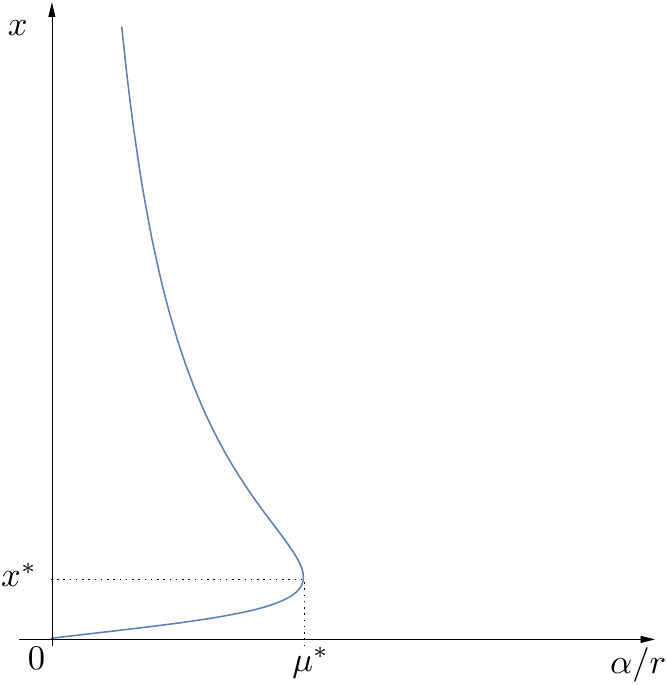}
	\caption{Bifurcation diagram of equilibria of \eqref{syst2}.}\label{figbif}
\end{figure}
%
%
%

\medskip

In this paper we study the local stability of the equilibria of the silicosis system \eqref{syst2}. The paper is organized as follows:
 
 In Section \ref{sec2} we recall some results obtained in \cite{cdg} about the time independent solutions of the silicosis system.
In particular we present a bifurcation equation, \eqref{bifeq}, whose solutions give (the component $x$ of) the equilibria of
\eqref{syst2} and point out properties of the bifurcation function that will be crucially important in the remaining of the paper.

\medskip

Informally, the result about the number of equilibria (already presented above in Proposition~\ref{prop:structure} and Figure~\ref{figbif})
states that if the balance between the input rates of silica, $\alpha$, and of macrophages, $r$, is such that, in some sense,
the silica input does not dominate, then \eqref{syst2} will have two equilibria and, from an heuristic viewpoint, we expect 
the equilibria with a smaller load of silica dust $x$ to be stable while the other is expected to be unstable. It is the goal of this
paper to make this argument rigorous and to prove this intuition.

\medskip

In Section \ref{sec3} we show that by introducing three bulk variables $u, v$ and $w$ defined by \eqref{uvw}  
system \eqref{syst2} is transformed into an infinite dimensional system \eqref{syst3} for the variables
$(x, u, v, w, M_0, \ldots)$ for which the equations for the variables $(x, u, \ldots, M_{N})$
consist of a closed $(N+5)$-dimension system of ordinary differential equations. It is this decoupling that allows
the study of the stability of the equilibria of \eqref{syst2} to be done by  first
obtaining appropriate results about the spectra of the linearizations
around the equilibria of this finite dimensional system, which is done in the remainder of section~\ref{sec3}. 

\medskip

In Section \ref{sec4} we study the local dynamics about the equilibria of the  full silicosis system \eqref{syst2} using the results
about the stability properties of the equilibria of the reduced $(N+5)$-dimensional system obtained in the previous section.
In particular we prove that our intuition was correct: the equilibrium of \eqref{syst2} with a lower load of silica dust $x^{\text{eq}}$ 
is locally exponentially asymptotically stable 
in the strong topology of $X$, whereas the equilibrium with a higher silica load $x^{\text{eq}}$ is unstable.

\medskip

In Section \ref{sec5} 
we present some  numerical evidence illustrating the spectra of the jacobian matrices of the linearizations of the $(N+5)$-dimensional system about the equilibria. These numerical experiments support the conjecture in section~\ref{sec3} about the dimension of the unstable manifold of the unstable equilibrium of the $(N+5)$-dimensional system, and also suggest that, besides those
properties proved in section~\ref{sec3}, which are relevant to our goal in this paper, the spectra has some other
features that could be interesting to explore in the future.

%
%
\section{Preliminaries: the equilibria}\label{sec2}
In this section we recall some of the results obtained in \cite{cdg} for the time independent solutions of the silicosis system
\eqref{syst2}. In that article the authors solve  equations \eqref{M0eq1}--\eqref{Mieq2} with  all the time derivatives equal to zero, thus obtaining the following expressions for the $M_i^\text{eq}$ variables corresponding to the equilibrium solutions, for general coefficients, $k_i>0$, $p_i\geqslant 0$, $q_i\geqslant 0$, in terms of the variable $x^\text{eq}$:
\begin{equation}\label{Mieq}
M_i^\text{eq}={\displaystyle \frac{r(x^{\text{eq}})^i}{k_i{\displaystyle\prod_{j=0}^i}(x^{\text{eq}}+d_j)}},\qquad i\geqslant 0,
\end{equation}
where $d_j=(p_j+q_j)/k_j.$ For our particular choice of the coefficients, that is, for system \eqref{syst2}, $d_j=1/k$,  and from \eqref{Mieq} they are easily obtained,
\begin{align}
\sum_{i=0}^\infty k_iM_i^{\text{eq}}&=k\sum_{i=0}^\infty M_i^{\text{eq}}=rk, \label{SumkM}\\
\sum_{i=0}^\infty iq_iM_i^{\text{eq}}&=\sum_{i=N+1}^\infty iM_i^{\text{eq}}=
r\big(kx+(N+1)\big)\left(\frac{x^{\text{eq}}}{x^{\text{eq}}+1/k}\right)^{N+1}. \label{SumiqM}
\end{align}
Plugging \eqref{SumkM} and \eqref{SumiqM} into the time independent version of the equation for the quartz concentration $x$ in system \eqref{syst2}, they obtain the bifurcation equation, 
\begin{equation}\label{bifeq}
\frac{\alpha}{r}-\mathcal{F}_{N,k}(x^{\text{eq}})=0,
\end{equation}
where, for all positive $x$,
\[
\mathcal{F}_{N,k}(x):=kx\left(1-\left(\frac{x}{x+1/k}\right)^{N+1}\right)-(N+1)\left(\frac{x}{x+1/k}\right)^{N+1}.
\]

Proposition \ref{prop:structure} in the previous section follows from the analysis of equation \eqref{bifeq}
that we briefly recall now: introducing the  variable $y=\frac{x^{\text{eq}}}{x^{\text{eq}}+1/k}$,
 and defining the function $\widetilde{\mathcal{F}}_{N}$ by
 \[
 \widetilde{\mathcal{F}}_N(y):=\frac{y}{1-y}\left(1-(N+1)y^{N}+Ny^{N+1}\right),
 \]
equation \eqref{bifeq} can be written as,
 \begin{equation}\label{bifeqy}
 \frac{\alpha}{r}-\widetilde{\mathcal{F}}_N(y)=0.
 \end{equation}
 Observe the independence of this bifurcation equation relatively to $k$: this coefficient only dictates how the variables $x^{\text{eq}}$ and $y$ are interrelated. Of relevance to our work are the arguments used in the proof of Proposition 2 in \cite{cdg} based on the study of the derivative,
 \begin{align}\label{Fprime}
 \widetilde{\mathcal{F}}_N'(y)=\frac{p_N(y)}{(1-y)^2},
 \end{align}
 where, 
 \begin{align}\label{pNy}
 p_N(y):=1-(N+1)^2y^N+N(2N+3)y^{N+1}-N(N+1)y^{N+2}.
 \end{align}
 The authors prove that $p_N$ is strictly decreasing in $\left(0,\frac{N+1}{N+2}\right)$
 and strictly increasing in $\left(\frac{N+1}{N+2},1\right)$. Since $p_N(1)=0,$ then
 $p_N\left(\frac{N+1}{N+2}\right)<0$ and, by the fact that $p_N(0)>0,$ it can be concluded that
 there is one and only one critical point $y^*$ of $\widetilde{\mathcal{F}}_N$ in $(0,1),$ and furthermore it satisfies, 
$y^*<\frac{N+1}{N+2}.$ This corresponds to the critical point $x^*$ of $\mathcal{F}_{k,N}$ that, 
together with \eqref{bifeq}, gives the bifurcation point $(\mu^*,x^*),$ displayed in the bifurcation diagram of figure \ref{figbif}.

%
%
%
%
\section{A finite dimensional reduced system}\label{sec3}
We start by showing that the dynamics of system \eqref{syst2} is dictated by a finite dimensional ODE. The characterization of the stability properties of our silicosis system will then be based on the study of this ODE.

Let us introduce the following three new variables:
\begin{align}
u:=\sum_{i=0}^\infty M_i,\qquad v:=\sum_{i=N+1}^\infty iM_i,\qquad w:=\sum_{i=N}^\infty M_i.\label{uvw}
\end{align}
By \cite[Corollary~5.3]{cps} the series in \eqref{uvw} are uniformly convergent. 
For any positive integer $m,$ we get, from \eqref{syst2},
\begin{align}
\sum_{i=0}^m \dot{M}_i &= r-kxM_0-M_0+kx\sum_{i=1}^m(M_{i-1}-M_i)-\sum_{i=1}^m M_i \nonumber \\
&= r - mM_m - \sum_{i=0}^mM_i\nonumber
\end{align}
and, if $m>N,$
\begin{align}
\sum_{i=N+1}^m i\dot{M}_i &= kx\sum_{i=N+1}^m i(M_{i-1}-M_i) - \sum_{i=N+1}^miM_i \nonumber \\
&=kx(NM_N-mM_m) + kx\sum_{i=N}^{m-1}M_i - \sum_{i=N+1}^miM_i.\nonumber
\end{align}
Hence, by the uniform convergence as $m\to \infty$ of the right-hand sides  of these equalities 
we conclude the left-hand sides are also uniformly convergent and since \cite[Proposition 6.1]{cps} 
ensures that  $M_i\in C^1([0,\infty))$, we conclude that $u, v$ and $w$ in \eqref{uvw} are $C^1$
functions and their derivative can be computed differentiating the series term-by-term:
\begin{align}
\dot{u}&=\sum_{i=0}^\infty \dot{M}_i =r-kxM_0-M_0+kx\sum_{i=1}^\infty(M_{i-1}-M_i)-\sum_{i=1}^\infty M_i\nonumber\\
&=r-u.\label{udot}\\
\dot{v}&=\sum_{i=N+1}^\infty i\dot{M}_i = kx\sum_{i=N+1}^\infty (iM_{i-1}-iM_i)-\sum_{i=N+1}^\infty iM_i\nonumber\\
&=kx\left(\sum_{i=N+1}^\infty \big((i-1)M_{i-1}-iM_i\big)+\sum_{i=N+1}^\infty M_{i-1}\right)-\sum_{i=N+1}^\infty iM_i\nonumber\\
&=kxNM_N+kxw-v.\label{vdot}\\
\dot{w}&=\sum_{i=N}^\infty \dot{M}_i = kx\sum_{i=N}^\infty (M_{i-1}-M_i)-\sum_{i=N}^\infty M_i\nonumber\\
&=kxM_{N-1}-w.\label{wdot}
\end{align}
Using our new variables in the $x$ equation of \eqref{syst2}, we can write that system augmented with \eqref{udot}, \eqref{vdot} and \eqref{wdot} as
\begin{equation}\label{syst3}
\left\{
\begin{aligned}
\dot{x}&=\alpha-kxu+v\\
\dot{u}&=r-u\\
\dot{v}&=-v+kxw+kNM_N\\
\dot{w}&=-w+kxM_{N-1}\\
\dot{M}_0&=r-M_0-kxM_0\\
\dot{M}_i&=-M_i-kxM_i+kxM_{i-1},\qquad i\geqslant 1.
\end{aligned}
\right.
\end{equation}
We now observe that if we discard the equations for $\dot{M}_i,$ with $i\geqslant N+1,$
we obtain a closed system in the $N+5$ variables $x,u,v,w,M_0,\dots,M_N.$ If we solve this ODE, then, by using the computed $x$ and $M_N,$ all the remaining variables $M_i$ can be recursively computed. Therefore, by defining,
\[
U_1:=x,\quad U_2:=u,\quad U_3:=v,\quad U_4:=w,\quad U_i:=M_{i-5},\quad 5\leqslant i\leqslant N+5,
\]
we can write that finite dimensional system in the form
\begin{equation}\label{NonlinearODE}
\dot{U}=F(U)
\end{equation}
with $F: \Rb^{N+5}\to\Rb^{N+5}$ defined by
\begin{equation}
F(U):=
\left[
\begin{aligned}
\;\alpha&+U_3-kU_1U_2\\
r&-U_2\\
&-U_3+kU_1U_4+NkU_1U_{N+5}\\
&-U_4+kU_1U_{N+4}\\
r&-U_5-kU_1U_5\\
&-U_6+kU_1U_5-kU_1U_6\\
&\qquad\qquad\qquad\vdots\\
&-U_{N+5}+kU_1U_{N+4}-kU_1U_{N+5}\;
\end{aligned}\right].
\end{equation}

\medskip

Let $\Ueq:=(\xeq,\ueq,\veq,\weq,\Meq_0,\dots,\Meq_N)$ be $U$ corresponding to one of the equilibrium solutions mentioned in the previous sections.
Following \cite{cdg} (see previous section), we introduce the variable
\[
y:=\frac{\xeq}{\xeq+1/k},
\]
and for the sake of simplifying notation (and since in the following we will only be referring to the equilibrium quantities) 
we drop the `eq' superscript for the computations in the remaining of this section. 
Hence, for each one of the equilibrium solutions, using the results of the previous section, we have:
\begin{align*}
U_1&:=x=\frac{1}{k}\frac{y}{1-y},\\
U_2&:=u=r,\\
U_3&:=v=r\frac{y^{N+1}}{1-y}\big((N+1)-Ny\big),\\
U_4&:=w=ry^N,\\[.3ex]
U_{5+i}&:=M_i=r(1-y)y^i, \qquad 0\leqslant i\leqslant N.
\end{align*}

 To study the linear stability of  these equilibria $U$ of the ordinary differential equation \eqref{NonlinearODE} we 
have to compute the characteristic polynomial of the 
$(N+5)\times(N+5)$ jacobian matrix $A:=DF(U),$ which is the goal of the next lemma.

Let us introduce the variable
\begin{equation}
\Delta:=1+\lambda(1-y).
\label{eq:delta}
\end{equation}

\begin{lemma}\label{lemmapolycaract}
The characteristic polynomial of the jacobian matrix \( A \) of \eqref{NonlinearODE} 
about an equilibrium  is given by
\begin{multline}\label{polycaract}
\det(A-\lambda I_{N+5})=(-1)^Nkr(1+\lambda)^2(1-y)^{-N-1}\bigg\{(1+\lambda)\left(1+\frac{\lambda}{kr}\right)\Delta^{N+1}\\
-y^N\Big[\Delta^{N+1}+(1-y)\Big(\big(N(1-y)(1+\lambda)+1\big)(1+\dots+\Delta^N)-1\Big)\Big]
\bigg\}.
\end{multline}
\end{lemma}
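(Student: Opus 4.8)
The plan is to compute $\det(A-\lambda I_{N+5})$ by exploiting the sparse, almost triangular structure of $A=DF(U)$. Writing $DF$ out explicitly and inserting the equilibrium values recalled above — so that $u=r$, $M_i=r(1-y)y^i$, $kx=y/(1-y)$ and $1+kx=1/(1-y)$ — one sees that the row of $A$ coming from the $u$-equation is $(0,-1,0,\dots,0)$; that the $(N+1)\times(N+1)$ block in the variables $M_0,\dots,M_N$ is lower bidiagonal, with constant diagonal $-1-kx$ and constant subdiagonal $kx$; and that, apart from these, the only ``long range'' couplings sit in the $x$-column (which meets the $v$-, $w$- and every $M_i$-row) and in the $M_{N-1}$- and $M_N$-columns (which meet the $w$- and $v$-rows respectively). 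The first step is to expand $\det(A-\lambda I_{N+5})$ along the $u$-row; this extracts the factor $-(1+\lambda)$ and leaves an $(N+4)\times(N+4)$ determinant $\det C$ in the variables $x,v,w,M_0,\dots,M_N$.

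The second step is a Schur complement. Split $C$ into blocks according to the partition $\{x,v,w\}\cup\{M_0,\dots,M_N\}$, writing $C=\left[\begin{smallmatrix}P&Q\\R&S\end{smallmatrix}\right]$ with $S$ the lower triangular $M$-block, $\det S=(-1-kx-\lambda)^{N+1}$, and $P$ the $3\times3$ block in $x,v,w$. Then $\det C=\det S\cdot\det(P-QS^{-1}R)$; this holds for $\lambda\ne -1-kx$, and since both sides of \eqref{polycaract} are polynomials in $\lambda$ the identity extends to all $\lambda$. Now $R$ has a single nonzero column, the $x$-column, equal to $\rho:=\bigl(-kM_0,\,k(M_0-M_1),\dots,k(M_{N-1}-M_N)\bigr)^{\!\top}$, while $Q$ only reads off the $M_N$- and $M_{N-1}$-components of the $v$- and $w$-rows. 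Hence $QS^{-1}R$ is supported on the $x$-column, with the two nonzero entries $Nkx\,\xi_N$ (in the $v$-row) and $kx\,\xi_{N-1}$ (in the $w$-row), where $\xi:=S^{-1}\rho$. So the whole determinant collapses to a $3\times3$ one, and for it one needs only the two scalars $\xi_{N-1}$ and $\xi_N$.

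The core of the proof is to solve the triangular system $S\xi=\rho$ — equivalently the recursion $-(1+kx+\lambda)\xi_j+kx\,\xi_{j-1}=k(M_{j-1}-M_j)$, with $\xi_{-1}=M_{-1}=0$ — in closed form. Here the equilibrium relations do the work: from $kxM_{j-1}=(1+kx)M_j$ one verifies that $\xi_j^{\mathrm p}:=-M_j/(\lambda x)$ is a particular solution for $j\ge 1$, while $\eta_j:=\bigl(kx/(1+kx+\lambda)\bigr)^{j}$ solves the homogeneous recursion; fixing the homogeneous coefficient by the $j=0$ equation and simplifying with $1+kx+\lambda=\Delta/(1-y)$ and $kx(1-y)=y$ gives
\[
\xi_j=-\frac{kr(1-y)^2\,y^{\,j-1}}{\lambda}\Bigl(1-(1+\lambda)\Delta^{-j-1}\Bigr),\qquad\text{hence}\qquad kx\,\xi_j=-\frac{kr(1-y)\,y^{\,j}}{\lambda}\Bigl(1-(1+\lambda)\Delta^{-j-1}\Bigr).
\]
I expect this step — obtaining $\xi_j$ in this explicit form while juggling the $y$, $1-y$ and $\Delta$ identities — to be the main obstacle.

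The last step is bookkeeping. Substituting $kx\,\xi_{N-1}$, $kx\,\xi_N$, together with $w=ry^N$, $M_{N-1}=r(1-y)y^{N-1}$ and $M_N=r(1-y)y^N$, into $P-QS^{-1}R$ and expanding the $3\times3$ determinant, one gets a rational function of $\lambda$ whose apparent poles at $\lambda=0$ (from the $1/\lambda$ in $\xi_j$) disappear once one writes $1-\Delta^{-m}=\lambda(1-y)\,(1+\Delta+\dots+\Delta^{m-1})\,\Delta^{-m}$, using $\Delta-1=\lambda(1-y)$. Collecting terms — and using $\Delta(1+\dots+\Delta^{N-1})=(1+\dots+\Delta^{N})-1$ — reproduces, up to the factor $(1+\lambda)$ and an elementary prefactor, the braced expression of \eqref{polycaract}; restoring the factors $-(1+\lambda)$ (from the $u$-row) and $\det S=(-1-kx-\lambda)^{N+1}=(-1)^{N+1}\Delta^{N+1}(1-y)^{-N-1}$ and collecting all the sign factors then yields \eqref{polycaract}.
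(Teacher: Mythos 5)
Your proposal is correct and follows essentially the same route as the paper: eliminate the $u$-row, then exploit the bidiagonal $M$-block to collapse everything to a $3\times3$ determinant in $(x,v,w)$ — your Schur complement $\det S\cdot\det(P-QS^{-1}R)$ is exactly the paper's left-multiplication by the unit-determinant matrix $\Lambda$ satisfying $\Lambda\widetilde{E}_\lambda=-\Delta I_{N+1}$, and your closed form for $\xi_j$ agrees with the paper's $\alpha_i$ via the identity $1-(1+\lambda)\Delta^{-j-1}=\lambda(1-y)\Delta^{-j-1}(1+\dots+\Delta^{j})-\lambda\Delta^{-j-1}$ that you invoke. The only cosmetic difference is that you solve the recursion $S\xi=\rho$ by a particular-plus-homogeneous ansatz where the paper writes out the finite geometric sums directly.
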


\begin{proof}
Observing that
\[-1-kU_1=-\frac{1}{1-y},\quad\text{ and }\quad U_{i+4}-U_{i+5}=r(1-y^2)y^{i-1},\]
we can write the \( (N+5)\times(N+5) \) linearization matrix $A =DF(U)$ in the form $A=\left[\begin{array}{@{}c|c@{}}
B&C\\\hline D&E
\end{array}\right]$, where \( B \) is the \( 4\times 4 \) matrix 
\[B=\begin{bmatrix}
-kr&-\dfrac{y}{1-y}&1&0\\[1.5ex]
0&-1&0&0\\
kry^{N}\big(1+N(1-y)\big)&0&-1&\dfrac{y}{1-y}\\[1.5ex]kry^{N-1}(1-y)&0&0&-1
\end{bmatrix},
\]
\( C \) and \(D\) are, respectively, the  \( 4\times(N+1) \) and \( (N+1)\times 4\) matrices
\[C=\begin{bmatrix}
0&\cdots&0&0\\
0&\cdots&0&0\\
0&\cdots&0&\dfrac{Ny}{1-y}\\
0&\cdots&\dfrac{y}{1-y}&0
\end{bmatrix},
\qquad
D=\begin{bmatrix}
-kr(1-y)&0&0&0\\[1ex]
kr(1-y)^{2}&\vdots&\vdots&\vdots\\[1ex]
kr(1-y)^{2}y&\vdots&\vdots&\vdots\\
\vdots&\vdots&\vdots&\vdots\\
kr(1-y)^{2}y^{N-1}&0&0&0
\end{bmatrix},
\]
and 
\( E \) is the  \( (N+1)\times(N+1) \) matrix
\[E=\begin{bmatrix}[1.8]
-\frac{1}{1-y}&0&\cdots&\cdots&0\\[1ex]
\frac{y}{1-y}&-\frac{1}{1-y}& & &\vdots\\
0&\ddots&\ddots & &\vdots\\
\vdots& &\ddots & \ddots& 0\\
0&\dots& 0 &\frac{y}{1-y}&-\frac{1}{1-y}
\end{bmatrix}.
\]
To compute the determinant of \( A-\lambda I_{N+5} \)
we will take advantage of the particular structure of the matrix \( A \) pointed out above and 
start by writing
\[ \det(A-\lambda I_{N+5})=\det\left[\begin{array}{@{}c|c@{}}
B-\lambda I_{4}&C\\\hline D&E-\lambda I_{N+1}
\end{array}\right].\]

\medskip

Then, we successively perform the following operations in \( A-\lambda I_{N+5} \) to achieve a final matrix with equal determinant:
\begin{enumerate}
\item factor out $\frac{1}{1-y}$ from the last $N+1$ columns;
\item factor out $\frac{1}{1-y}$ from the third row;
\item factor out  $kr(1-y)$ from the first column;
\item apply Laplace determinant expansion relative to the second row.
\end{enumerate}
In the end we obtain,
\begin{equation}\label{detalambda0}
 \det(A-\lambda I_{N+5})=-kr(1+\lambda)(1-y)^{-N-1}\det(\widetilde{A}_{\lambda}),
\end{equation}
with $\widetilde{A}_\lambda=\left[\begin{array}{@{}c|c@{}}
\widetilde{B}_\lambda&\widetilde{C}\\\hline\\[-0.9em] \widetilde{D}&\widetilde{E}_\lambda
\end{array}\right],$ where, $\widetilde{B}_\lambda$ is the $3\times 3$ matrix
\[\widetilde{B}_\lambda=\begin{bmatrix}
-(1+\lambda/kr)(1-y)^{-1}&1&0\\[1.5ex]
y^{N}\big(1+N(1-y)\big)&-(1+\lambda)(1-y)&y\\[1.5ex]y^{N-1}&0&-(1+\lambda)
\end{bmatrix},
\]
\( \widetilde{C} \) and \(\widetilde{D}\) are, respectively, the  \( 3\times(N+1) \) and \( (N+1)\times 3\) matrices
\[\widetilde{C}=\begin{bmatrix}
0&\cdots&0&0\\
0&\cdots&0&Ny(1-y)\\
0&\cdots&y&0
\end{bmatrix},
\qquad
\widetilde{D}=\begin{bmatrix}
-1&0&0\\[1ex]
1-y&\vdots&\vdots\\[1ex]
(1-y)y&\vdots&\vdots\\
\vdots&\vdots&\vdots\\
(1-y)y^{N-1}&0&0
\end{bmatrix},
\]
and $\widetilde{E}_\lambda$ is the $(N+1)\times(N+1)$ matrix,
\[\widetilde{E}_\lambda=\begin{bmatrix}[1.5]
-\Delta&0&\cdots&\cdots&0\\
y&-\Delta& & &\vdots\\
0&\ddots&\ddots & &\vdots\\
\vdots& &\ddots & \ddots& 0\\
0&\dots& 0 &y&-\Delta
\end{bmatrix}.
\]
The main idea here is to left multiply $\tilde{A}_\lambda$ by a square matrix of determinant 1, 
in such a way that the resulting matrix has a more easily computable determinant. 
For the following we consider that $\Delta\not= 0$. Consider the $(N+1)\times(N+1)$ matrix
 \[\Lambda=\begin{bmatrix}[1.0]
1&0&\cdots&\cdots&0\\
\frac{y}{\Delta}&1& & &\vdots\\
\frac{y^2}{\Delta^2}&\frac{y}{\Delta}&1 & &\vdots\\
\vdots&\ddots &\ddots & \ddots& 0\\
\frac{y^N}{\Delta^N}&\frac{y^{N-1}}{\Delta^{N-1}}& \cdots &\frac{y}{\Delta}&1
\end{bmatrix},
\]
and observe that $\Lambda\widetilde{E}_\lambda=-\Delta I_{N+1}.$ Therefore,
\begin{equation}\label{detAlambda}
\det(\widetilde{A}_{\lambda})=
\det \left( \left[
\begin{array}{@{}c|c@{}}
I_3 & 0\\
\hline\\[-0.9em]
0 & \Lambda
\end{array}
\right]\widetilde{A}_{\lambda}
\right)=
\det \left[
\begin{array}{@{}c|c@{}}
\widetilde{B}_\lambda & \widetilde{C}\\
\hline \\[-0.9em]
\Lambda \widetilde{D} & -\Delta I_{N+1}
\end{array}
\right].
\end{equation}
The second and third columns of $\Lambda \widetilde{D}$ are $N+1$ dimension nul columns, while 
the first column of $\Lambda \widetilde{D}=
\big[-1 \quad \alpha_0\quad \alpha_1\quad \cdots\quad \alpha_{N-1}\big]^{\top},
$
where, for $0\leqslant i\leqslant N-1,$ these entries are given by,
\[
\alpha_i:=
\left[\frac{y^{i+1}}{\Delta^{i+1}}\quad \frac{y^{i}}{\Delta^{i}}\quad \cdots\quad 1 \quad 0 \quad\cdots\quad  0\right]
\begin{bmatrix}
-1\\[1ex]
1-y\\[1ex]
(1-y)y\\
\vdots\\
(1-y)y^{N-1}
\end{bmatrix}.
\]
In particular we will need explicit expressions for the last two entries:
\begin{equation}\label{alphaN}
\begin{aligned}
\alpha_{N-2}&=-\frac{y^{N-1}}{\Delta^{N-1}}+(1-y)\frac{y^{N-2}}{\Delta^{N-2}}(1+\Delta+\dots+\Delta^{N-2}),\\
\alpha_{N-1}&=-\frac{y^N}{\Delta^N}+(1-y)\frac{y^{N-1}}{\Delta^{N-1}}(1+\Delta+\dots+\Delta^{N-1}).
\end{aligned}
\end{equation}
Now, we eliminate the nonzero entries of $\widetilde{C}$. By multiplying the last row of the last matrix in \eqref{detAlambda} by $\frac{Ny(1-y)}{\Delta}$ and adding to the second row, we eliminate the last entry of this row. Then, by multiplying the penultimate row by $\frac{y}{\Delta}$ and adding to the third row, we eliminate the penultimate  entry of this row. Therefore, from \eqref{detAlambda}
\begin{equation}\label{detalambda1}
\det(\widetilde{A}_{\lambda})=
\det \left[
\begin{array}{@{}c|c@{}}
\widetilde{B}^*_\lambda & 0\\
\hline \\[-0.9em]
\Lambda \widetilde{D} & -\Delta I_{N+1}
\end{array}\right]
=(-1)^{N+1}\Delta^{N+1}\det \widetilde{B}^*_\lambda, 
\end{equation}
where,
\[\widetilde{B}^*_\lambda=\begin{bmatrix}
-(1+\lambda/kr)(1-y)^{-1}&1&0\\[1.5ex]
y^{N}\big(1+N(1-y)\big)+\alpha_{N-1}\frac{Ny(1-y)}{\Delta}&-(1+\lambda)(1-y)&y\\[1.5ex]y^{N-1}+\alpha_{N-2}\frac{y}{\Delta}&0&-(1+\lambda)
\end{bmatrix},
\]
and therefore,
\begin{equation}\label{detBlambda}
\begin{aligned}
\det \widetilde{B}^*_\lambda=&-(1+\lambda)^2\left(1+\frac{\lambda}{kr}\right)\\
&+\left(y^N\big(1+N(1-y)\big)+\alpha_{N-1}\frac{Ny(1-y)}{\Delta}\right)(1+\lambda)\\
&+y^N+\alpha_{N-2}\frac{y^2}{\Delta}.
\end{aligned}
\end{equation}
Using the explicit expressions for $\alpha_{N-1}$ and $\alpha_{N-2}$ given by \eqref{alphaN} we get
\begin{align*}
y^N\big(1+N(1-y)\big)&+\alpha_{N-1}\frac{Ny(1-y)}{\Delta}=
y^N\left[1+N(1-y)\left(1+\frac{y^{1-N}}{\Delta}\alpha_{N-1}\right)\right]\\
&=y^N\left[1+\frac{N(1-y)}{\Delta^{N+1}}(\Delta-y)(1+\dots+\Delta^N)\right],
\end{align*}
and 
\begin{align*}
y^N+\alpha_{N-2}\frac{y^2}{\Delta}
&=y^N\left[1+\frac{1}{\Delta^N}\left(-y+(1-y)(\Delta+\dots+\Delta^{N-1})\right)\right]\\
&=\frac{y^N}{\Delta^{N}}(\Delta-y)(1+\dots+\Delta^{N-1})\\
&=\frac{y^N}{\Delta^{N+1}}\left[(\Delta-y)(1+\dots+\Delta^{N})-(\Delta-y)\right].
\end{align*}
By plugging this last expression in \eqref{detBlambda} we have,
\begin{align*}
\det \widetilde{B}^*_\lambda=&-(1+\lambda)^2\left(1+\frac{\lambda}{kr}\right)\\
&+y^N\Big\{1+\lambda+\frac{N(1-y)(1+\lambda)}{\Delta^{N+1}}(\Delta-y)(1+\dots+\Delta^N)\\
&+\frac{1}{\Delta^{N+1}}\left[(\Delta-y)(1+\dots+\Delta^{N})-(\Delta-y)\right]\Big\}\\
=&-(1+\lambda)^2\left(1+\frac{\lambda}{kr}\right)\\
&+y^N(1+\lambda)\Big\{1+\frac{(1-y)}{\Delta^{N+1}}\left[\left(N(1-y)(1+\lambda)+1\right)(1+\dots+\Delta^N)-1\right]\Big\},
\end{align*}
where we have used the fact that $\Delta-y=(1+\lambda)(1-y).$ By using \eqref{detalambda0}, \eqref{detalambda1} and last equation, we obtain \eqref{polycaract}.
\end{proof}

\medskip

The determinant of the matrix $A$ is obtained from \eqref{polycaract} by making $\lambda=0,$ in which case, also $\Delta=1$ and we obtain:
\begin{equation}\label{detA}
\det A=(-1)^Nkr(1-y)^{-N-1}P_{N}(y).\end{equation}
Therefore, if we compare this with \eqref{pNy} we see that the bifurcation condition $p_N(y)=0$ is equivalent to
$\det A=0,$ as it should be. Let $y^*=y^*(N)$ be the unique solution of this bifurcation equation to which corresponds $\alpha/r=\mu^*=\mathcal{F}_N(y^*).$ 
Our next step is to show that $\lambda=0$ is a simple eigenvalue of $A,$ when $y=y^*$. 

\begin{lemma}\label{lemmasimple}
For $y=y^*$, $\lambda=0$ is a simple eigenvalue of $A.$
\end{lemma}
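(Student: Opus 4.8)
The plan is to show that $\lambda=0$ is a \emph{simple root} of the characteristic polynomial $\det(A-\lambda I_{N+5})$. By \eqref{detA} and the analysis of $p_N$ recalled in Section~\ref{sec2}, $\det A=0$ happens (for admissible $y\in(0,1)$) precisely when $y=y^*$, so $\lambda=0$ is indeed an eigenvalue of $A$ for $y=y^*$; it is simple if and only if
\[
\frac{d}{d\lambda}\det(A-\lambda I_{N+5})\Big|_{\lambda=0}\neq 0 .
\]
Thus the whole lemma reduces to computing this derivative at $y=y^*$ and showing it does not vanish.

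I would argue from the closed formula \eqref{polycaract}. Write it as $\det(A-\lambda I_{N+5})=(-1)^N kr(1+\lambda)^2(1-y)^{-N-1}\,G(\lambda)$, where $G(\lambda)$ is the factor between the braces. Setting $\lambda=0$ (so $\Delta=1$) gives $G(0)=1-y^N\big(1+N(1-y)+N(N+1)(1-y)^2\big)$, which is just $p_N(y)$; hence at $y=y^*$ the condition $\det A=0$ is equivalent to $G(0)=0$, i.e. to the constraint
\[
(y^*)^N\,Q=1,\qquad Q:=1+N(1-y^*)+N(N+1)(1-y^*)^2>0 .
\]
Differentiating $\det(A-\lambda I_{N+5})$ and using $G(0)=0$ kills every term but one, leaving
\[
\frac{d}{d\lambda}\det(A-\lambda I_{N+5})\Big|_{\lambda=0}=(-1)^N kr(1-y^*)^{-N-1}\,G'(0),
\]
so it suffices to prove $G'(0)\neq 0$ when $y=y^*$.

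Next I would evaluate $G'(0)$ directly from the definition of $G$, using $\Delta'=1-y$, $\dfrac{d}{d\lambda}\Delta^{N+1}\big|_{\lambda=0}=(N+1)(1-y)$ and $\dfrac{d}{d\lambda}(1+\dots+\Delta^N)\big|_{\lambda=0}=\tfrac{N(N+1)}{2}(1-y)$. With the abbreviation $z:=1-y$ this yields
\[
G'(0)=1+\frac{1}{kr}+(N+1)z\,(1-y^N)-\frac{N(N+1)}{2}\,y^N z^2\,(3+Nz).
\]
Now impose $y=y^*$, so $y^N=1/Q$ and $1-y^N=(Q-1)/Q=Nz\big(1+(N+1)z\big)/Q$; substituting and simplifying, the terms involving $Q$ collapse and one obtains
\[
G'(0)=1+\frac{1}{kr}+\frac{N(N+1)z^2}{2Q}\big((N+2)z-1\big).
\]
Finally, recall from Section~\ref{sec2} (following \cite{cdg}) that $y^*<\tfrac{N+1}{N+2}$, hence $z=1-y^*>\tfrac{1}{N+2}$ and $(N+2)z-1>0$; since moreover $Q>0$, $k>0$, $r>0$, every summand on the right is positive, so $G'(0)>1>0$. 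This gives the required nonvanishing of the $\lambda$-derivative of the characteristic polynomial at $\lambda=0$, hence $\lambda=0$ is a simple eigenvalue of $A$.

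The only genuinely fiddly step is the algebraic reduction of $G'(0)$ after using the equilibrium constraint $(y^*)^N Q=1$; everything else is bookkeeping, and the conceptual point that forces the sign is the strict inequality $y^*<\tfrac{N+1}{N+2}$ already established in the proof of Proposition~\ref{prop:structure}.
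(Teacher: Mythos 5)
Your proposal is correct and follows essentially the same route as the paper: both arguments reduce simplicity of $\lambda=0$ to the nonvanishing of the coefficient of $\lambda$ in the expansion of \eqref{polycaract} about $\lambda=0$ (your $G'(0)$ is exactly the paper's $d_1$ from \eqref{d1}, since the factor $2G(0)$ from the product rule vanishes at $y=y^*$), and both establish positivity from $p_N(y^*)=0$ together with $y^*<\tfrac{N+1}{N+2}$. The only cosmetic difference is that you substitute the constraint $(y^*)^N Q=1$ directly into $G'(0)$, whereas the paper compares the auxiliary quantity $q_N$ with $p_N$; the two computations are algebraically identical.
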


\begin{proof}
First, for a generic equilibrium, and therefore for a generic $y\in(0,1),$ we compute $a_1$ (depending on $N,kr,y$) such that, as $\lambda\to 0,$
\begin{equation}\label{cpdev}
\det(A-\lambda I_{N+5})=\det A +a_1\lambda +O(|\lambda|^2),
\end{equation}
with fixed $N, kr, y.$

It is convenient to introduce
\[g(\Delta):=\sum_{i=0}^N \Delta^i.\]
Therefore, since $g(1)=N+1,$ and $g'(1)=\frac{N(N+1)}{2},$ we obtain, as $\lambda\to 0,$
\[g(\Delta)=g_0+g_1\lambda+O(|\lambda|^2),\]
for $g_0=N+1$ and $g_1=\frac{1}{2}N(N+1)(1-y).$ Hence, in \eqref{polycaract}, we will have
\begin{align*}
\big(N(1-&y)(1+\lambda)+1\big)g(\Delta)\\
&=\big[N(1-y)+1+N(1-y)\lambda\big]\big(g_0+g_1\lambda\big)+O\left(|\lambda|^2\right)\\
&=\big[N(1-y)+1\big] g_0+\big[\big(N(1-y)+1\big)g_1+N(1-y)g_0\big]\lambda+O\left(|\lambda|^2\right)\\
&=b_0+b_1\lambda+O\left(|\lambda|^2\right),
\end{align*}
where,
\begin{align*}
b_0&:=(N+1)\big(N(1-y)+1\big)\\
b_1&:=\frac{1}{2}N(N+1)(1-y)\big(N(1-y)+3\big).
\end{align*}
Taking in account that,
\[\Delta^{N+1}=1+(N+1)(1-y)\lambda+O\left(|\lambda|^2\right),\]
so that
\[
\left[(1+\lambda)\left(1+\frac{\lambda}{kr}\right)-y^N\right]\Delta^{N+1}=c_0+c_1\lambda+O\left(|\lambda|^2\right),
\]
where,
\begin{align*}
c_0&=1-y^N\\
c_N&=1+\frac{1}{kr}+(N+1)(1-y)(1-y^N),
\end{align*}
we have in \eqref{polycaract},
\begin{align*}
(1+\lambda)\left(1+\frac{\lambda}{kr}\right)\Delta^{N+1}&\\
-y^N\Big[\Delta^{N+1}&+(1-y)\Big(\big(N(1-y)(1+\lambda)+1\big)(g(\Delta)-1\Big)\Big]\\
&=d_0+d_1\lambda+O\left(|\lambda|^2\right),
\end{align*}
where,
\begin{align}
d_0&=1-y^N\Big[1+(1-y)\Big(\big(N(1-y)+1\big)(N+1)-1\Big)\Big]\label{d0}\\
d_1&=1+\frac{1}{kr}+(N+1)(1-y)\left\{1-y^N\left[1+\frac{N(1-y)}{2}\big(N(1-y)+3\big)\right]\right\}.\label{d1}
\end{align}
Therefore, by \eqref{polycaract}
\begin{align*}
\det(A-\lambda I_{N+5})&=(-1)^Nkr(1-y)^{-N-1}\big(d_0+(2d_0+d_1)\lambda\big)+O\left(|\lambda|^2\right).
\end{align*}
Now, by comparing \eqref{d0} and \eqref{pNy}, we observe that $d_0=p_N(y)$, so that,
\[\det(A-\lambda I_{N+5})=\det A+\left[2\det A+(-1)^Nkr(1-y)^{-N-1}d_1\right]\lambda+O\left(|\lambda|^2\right).\]
Therefore, we obtain \eqref{cpdev} with
\[a_1=2\det A+(-1)^Nkr(1-y)^{-N-1}d_1.\]
Now, when we are considering the equilibrium corresponding to $(\alpha/r,y)=(\mu^*,y^*)$ we know that
$\det A=0$, so that, as $\lambda\to 0,$
\begin{align*}
\det(A-\lambda I_{N+5})&=(-1)^Nkr(1-y^*)^{-N-1}d_1\lambda+O\left(|\lambda|^2\right).
\end{align*}
Hence, $\lambda=0$ will be a simple eigenvalue of $A$ if and only if $d_1\not=0$ for $y=y^*$,
what we are going to show that indeed it is here the case. Let us define in \eqref{d1},
\begin{align*}
q_N(y)&:=1-y^N\left[1+\frac{N(1-y)}{2}\big(N(1-y)+3\big)\right]\\
&=1-y^N\left[1+\frac{3}{2}N(1-y)+\frac{1}{2}N^2(1-y)^2\right].
\end{align*}
Rewriting $p_N(y)$ in the form
\[
p_N(y)=1-y^N\Big[1+N(1-y)+N(N+1)(1-y)^2\Big],
\]
we easily obtain
\[q_N(y)-p_N(y)=\frac{1}{2}y^N(1-y)N\Big[(N+2)(1-y)-1\Big].\]
Now, consider the case $y=y^*.$ Since by definition, $p_N(y^*)=0$, we get, 
\[q_N(y^*)=\frac{1}{2}(y^*)^N(1-y)N\Big[(N+2)(1-y^*)-1\Big].\] 
But according to \cite{cdg} (see previous section), we know that $0<y^*<\frac{N+1}{N+2},$ so that,
\[
(N+2)(1-y^*)-1>(N+2)\left(1-\frac{N+1}{N+2}\right)-1=0,
\]
which proves that, for $y=y^*$, $q_N(y^*)>0$, and therefore, $d_1>0$. 
This completes the proof that, for $y=y^*$, $\lambda=0$ is a simple eigenvalue of $A$.
\end{proof}

\medskip

The next lemma will be crucial for the stability result in Theorem~\ref{stabilityatbifpoint}

\medskip

\begin{lemma}\label{imaginary}
For every $0<y\leqslant y^*$, the matrix $A$ does not have pure imaginary eigenvalues.
\end{lemma}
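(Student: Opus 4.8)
The plan is to prove directly that $\det(A-\mathrm{i}\omega I_{N+5})\neq 0$ for every real $\omega\neq 0$ and every $y\in(0,y^*]$. Since the prefactor $(-1)^Nkr(1+\lambda)^2(1-y)^{-N-1}$ in \eqref{polycaract} does not vanish at $\lambda=\mathrm{i}\omega$ (because $(1+\mathrm{i}\omega)^2\neq 0$ for all real $\omega$, and $0<y<1$), it suffices to show $\Phi(\mathrm{i}\omega)\neq 0$, where $\Phi(\lambda)$ denotes the expression between braces in \eqref{polycaract}; and since $\omega\neq 0$ this is equivalent to $G(\mathrm{i}\omega)\neq 0$ for $G(\lambda):=\lambda\,\Phi(\lambda)$. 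The first step is to put $G$ in a compact form. Using $1+\dots+\Delta^N=(\Delta^{N+1}-1)/(\Delta-1)$ together with $\Delta-1=\lambda(1-y)$ and $\Delta-y=(1+\lambda)(1-y)$ (the identity already exploited in the proof of Lemma~\ref{lemmapolycaract}), the bracketed expression collapses, after multiplication by $\lambda$, to
\[
G(\lambda)=(1+\lambda)\Bigl(\lambda\bigl(1+\tfrac{\lambda}{kr}\bigr)-y^N\bigl(N(1-y)+1\bigr)\Bigr)\Delta^{N+1}+y^N\bigl(N(1-y)+1+(N+1)(1-y)\lambda\bigr).
\]
As a consistency check, $G(0)=0$ and the linear coefficient of $G$ equals $p_N(y)$.

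Next I would put $\lambda=\mathrm{i}\omega$, set $t:=(1-y)\omega\in\Rb\setminus\{0\}$ so that $\Delta=1+\mathrm{i}t$, and clear denominators. Writing $a:=N(1-y)+1$, $\rho:=kr(1-y)$, $\sigma:=kr(1-y)^2y^Na$ and $\kappa:=kr(1-y)^3y^N$, a direct computation yields
\[
kr(1-y)^3\,G(\mathrm{i}\omega)=\bigl((1-y)+\mathrm{i}t\bigr)(1+\mathrm{i}t)^{N+1}\bigl(-(t^2+\sigma)+\mathrm{i}\rho t\bigr)+\kappa\bigl(a+\mathrm{i}(N+1)t\bigr).
\]
If $G(\mathrm{i}\omega)$ were zero, the two summands on the right would be negatives of one another, hence of equal modulus; passing to squared moduli and writing $s:=t^2>0$ and $P:=(1-y)^2$, this forces
\[
(P+s)(1+s)^{N+1}\Bigl((s+krPy^Na)^2+k^2r^2Ps\Bigr)=k^2r^2P^3y^{2N}\bigl(a^2+(N+1)^2s\bigr).
\]
Both sides agree at $s=0$ (consistently with $G(0)=0$), so the content is to rule out equality for $s>0$.

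To finish, I would show that the left-hand side strictly exceeds the right-hand side for every $s>0$. Bounding below by the Bernoulli inequality $(1+s)^{N+1}\geqslant 1+(N+1)s$ and by $(s+krPy^Na)^2\geqslant (krPy^Na)^2+2krPy^Na\,s$ reduces the claim to a cubic polynomial inequality in $s$; after subtracting the right-hand side, the coefficients of $s^3$ and $s^2$ are manifestly positive, while the coefficient of $s$ comes out as
\[
P\Bigl(1+\tfrac{2y^Na}{kr}\Bigr)+Py^{2N}\bigl(a^2(1+(N+1)P)-(N+1)^2P\bigr),
\]
which is positive because the elementary inequality $a=N(1-y)+1\geqslant (N+1)(1-y)$ (equivalent to $y\geqslant 0$) gives $a^2\geqslant (N+1)^2P$, whence $a^2(1+(N+1)P)-(N+1)^2P\geqslant (N+1)^3P^2>0$. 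Hence the displayed identity cannot hold for $s>0$, which is the required contradiction.

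The one genuinely laborious step should be the reduction itself: collapsing $\Phi$ to the compact form of $G$, and then carefully tracking the moduli after the substitution $\lambda=\mathrm{i}\omega$. Passing straight to squared moduli, instead of separating $G(\mathrm{i}\omega)$ into real and imaginary parts (whose degrees and leading terms depend on the parity of $N$ through the binomial expansion of $(1+\mathrm{i}t)^{N+1}$), is what makes the endgame purely elementary. Incidentally, this argument uses nothing about $y$ beyond $y\in(0,1)$, so the hypothesis $y\leqslant y^*$ is in fact not needed here.
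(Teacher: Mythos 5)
Your argument is correct, and I verified the key computations: the compact form $G(\lambda)=(1+\lambda)\bigl(\lambda(1+\lambda/kr)-y^Na\bigr)\Delta^{N+1}+y^N\bigl(a+(N+1)(1-y)\lambda\bigr)$ with $a=N(1-y)+1$ does follow from \eqref{polycaract} via $\lambda(1-y)(1+\dots+\Delta^N)=\Delta^{N+1}-1$ and $\lambda+N(1-y)(1+\lambda)+1=a(1+\lambda)$; the squared-modulus identity in $s=t^2$ is right (both sides equal $k^2r^2P^3y^{2N}a^2$ at $s=0$); and the coefficient analysis closes the argument, since $a\geqslant(N+1)(1-y)$ gives $a^2(1+(N+1)P)-(N+1)^2P\geqslant(N+1)^3P^2>0$. (Your stated coefficient of $s$ is off from mine by a harmless positive factor $(kr)^2P$ — a normalization slip that does not affect the sign. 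Also, the Bernoulli step is optional: $L(s)$ is a product of polynomials with nonnegative coefficients, so all its coefficients of degree $\geqslant 2$ already dominate the affine right-hand side, and only the linear coefficient needs checking.)

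This is a genuinely different — and sharper — route than the paper's. The paper splits the characteristic equation as $F_N(\lambda,\beta,y)=G_N(\lambda,y)$ with $F_N=(1+\beta\lambda)y^{-N}$ and bounds $|F_N(i\omega)|>y^{-N}$ from below and $|G_N(i\omega)|<1+N(1-y)+N(N+1)(1-y)^2$ from above by crude triangle inequalities ($|\Delta^{-1}|<1$, $|\hat g(\Delta)|<N$); the gap between these two bounds is exactly $y^{-N}p_N(y)$, which is why the hypothesis $y\leqslant y^*$ (i.e.\ $p_N(y)\geqslant 0$) is indispensable there. You instead clear denominators and compare the \emph{exact} squared moduli of the two summands, which converts the problem into a polynomial inequality in $s$ whose coefficients are positive for every $y\in(0,1)$. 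Your closing observation is therefore significant rather than incidental: your argument removes the restriction $y\leqslant y^*$ entirely, which is precisely what the authors say in Remark~\ref{remark} they could not do, and which (combined with the continuation argument already used in the proof of Theorem~\ref{stabilityatbifpoint}) would settle Conjecture~\ref{conjonstability} that the unstable manifold of $U^{2*}$ is one-dimensional for all $\alpha/r\in(0,\mu^*)$ and all $kr>0$. You should write out the reduction in full and present it as a strengthening of the lemma, not merely as an alternative proof.
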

\begin{proof}
We intend to prove that, if $0< y\leqslant y^*,$ then, the equation $\det(A-\lambda I_{N+5})=0$ does not have pure imaginary solutions. Using \eqref{polycaract}, this equation, for $kr\not=0$ and $\lambda\not=-1$, is equivalent to
$$
(1+\beta\lambda)y^{-N}
=\Big[1+(1-y)\Big(\big(N(1-y)(1+\lambda)+1\big)g(\Delta)-1\Big)\Delta^{-N-1}\Big](1+\lambda)^{-1},
$$
recalling that,
$
g(\Delta):=\sum_{i=0}^N\Delta^i,
$
and defining $\beta:=\frac{1}{kr}$. By writing,
\begin{align*}
F_N(\lambda,\beta,y)&:=(1+\beta\lambda)y^{-N},\\
G_N(\lambda,y)&:=\Big[1+(1-y)\Big(\big(N(1-y)(1+\lambda)+1\big)g(\Delta)-1\Big)\Delta^{-N-1}\Big](1+\lambda)^{-1},
\end{align*}
the above equation can be written as
\begin{equation}\label{FNGN}
F_N(\lambda,\beta,y)=G_N(\lambda,y).
\end{equation}
Now, take $\lambda=i\omega,$ with real $\omega\not=0.$ Then, since $\beta>0,$
\[
|F_N(i\omega,\beta,y)|=y^{-N}\sqrt{1+\beta^2\omega^2}> y^{-N}.
\]
On the other hand, defining
$$
\hat{g}(\Delta):=\sum_{i=1}^N\Delta^{-i},
$$
we can write,
\begin{align*}
G_N(\lambda,y)=\big(1+(1-y)\Delta^{-1}\hat{g}(\Delta)\big)(1+\lambda)^{-1}+N(1-y)^2(\hat{g}(\Delta)+1)\Delta^{-1}.
\end{align*}
By observing that, for $\lambda=i\omega,$ with real $\omega\not=0,$ we have $|(1+\lambda)^{-1}|<1$, but also $|\Delta^{-1}|<1,$ which in turn implies,
$
|\hat{g}(\Delta)|<N,
$
we conclude that,
$$
|G_N(i\omega,y)|<1+N(1-y)+N(N+1)(1-y)^2,
$$
so that,
$$
|F_N(i\omega,\beta,y)|-|G_N(i\omega,y)|>y^{-N}-\Big[1+N(1-y)+N(N+1)(1-y)^2\Big]=y^{-N}p(y).
$$
But recalling the results summarized in section 2., we know that, for $0<y\leqslant y^*,$ $p_N(y)\geqslant 0,$ and therefore,
$$
|F_N(i\omega,\beta,y)|-|G_N(i\omega,y)|>0
$$
which makes it impossible for equation \eqref{FNGN} to be satisfied for any $\lambda=i\omega$, with $\omega\not=0.$
\end{proof}

\medskip

We can now state the main result of this section:

\medskip

\begin{theorem}\label{stabilityatbifpoint}
Let $\mu^*$ be as in Proposition~\ref{prop:structure}, and let $x^*$ be the value of the $U_1$ 
component of the unique equilibrium of \eqref{NonlinearODE}
when $\alpha/r = \mu^*$ (see Fig.~\ref{figbif}.) Then, for every $\alpha/r \in (0, \mu^*)$ and all $kr>0$,  
the equilibrium solution $U^{1*}$, with $U_1^{1*}<x^{*}$,  is locally exponentially asymptotically stable, 
and the equilibrium solution $U^{2*}$, with $U_1^{2*}>x^{*}$, is unstable.
\end{theorem}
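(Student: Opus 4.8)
The plan is to read off stability from the spectrum of the Jacobian $A=DF(U)$ at each of the two equilibria, combining the explicit characteristic polynomial \eqref{polycaract}, the determinant formula \eqref{detA}, and Lemma~\ref{imaginary}. Recall from Section~\ref{sec2} that $x\mapsto y=\frac{x}{x+1/k}$ is a strictly increasing bijection of $(0,\infty)$ onto $(0,1)$, that $p_N>0$ on $(0,y^*)$ and $p_N<0$ on $(y^*,1)$, and that for $\alpha/r\in(0,\mu^*)$ the equilibrium $U^{1*}$ corresponds to some $y\in(0,y^*)$ and $U^{2*}$ to some $y\in(y^*,1)$; moreover, since $\widetilde{\mathcal{F}}_N$ increases from $0$ on $[0,y^*]$, the $y$-value of $U^{1*}$ tends to $0$ as $\alpha/r\to 0^+$.

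\emph{Lower equilibrium.} I would use a homotopy/connectedness argument in the parameter $y\in[0,y^*)$. Evaluating \eqref{polycaract} at $y=0$ (so $\Delta=1+\lambda$) and simplifying gives $\det(A-\lambda I_{N+5})=(-1)^N(1+\lambda)^{N+4}(kr+\lambda)$, whose roots $-1$ and $-kr$ all lie in the open left half-plane. The entries of $A$ are continuous (polynomial) functions of $y$, hence so are the roots of its characteristic polynomial; and for each $y\in(0,y^*)$ we have $\det A=(-1)^Nkr(1-y)^{-N-1}p_N(y)\neq 0$ by \eqref{detA}, so $0$ is not an eigenvalue, while Lemma~\ref{imaginary} rules out nonzero purely imaginary eigenvalues. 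Thus no eigenvalue of $A$ meets the imaginary axis for any $y\in[0,y^*)$, so the number of eigenvalues with nonnegative real part is locally constant, hence constant on the connected set $[0,y^*)$, hence equal to its value $0$ at $y=0$. Therefore $A$ is a Hurwitz matrix at $U^{1*}$ and, by the principle of linearized stability, $U^{1*}$ is locally exponentially asymptotically stable.

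\emph{Upper equilibrium.} Here I would exhibit a positive real eigenvalue directly, avoiding any continuation along $(y^*,1)$. For $kr>0$ and $\lambda>0$, the equation $\det(A-\lambda I_{N+5})=0$ is equivalent, exactly as in the proof of Lemma~\ref{imaginary}, to $F_N(\lambda,\beta,y)=G_N(\lambda,y)$ with $\beta=1/(kr)$; set $H(\lambda):=F_N(\lambda,\beta,y)-G_N(\lambda,y)$, a continuous function on $[0,\infty)$. One computes $H(0)=y^{-N}p_N(y)$ (in agreement with \eqref{detA}), which is strictly negative because $p_N(y)<0$ for $y\in(y^*,1)$. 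On the other hand $F_N(\lambda,\beta,y)=(1+\beta\lambda)y^{-N}\to +\infty$ as $\lambda\to+\infty$, whereas $G_N(\lambda,y)$ remains bounded (indeed tends to $0$): the factor $(1+\lambda)^{-1}$ vanishes in the limit, and since $\Delta=1+\lambda(1-y)\to+\infty$ the term $\big(\big(N(1-y)(1+\lambda)+1\big)g(\Delta)-1\big)\Delta^{-N-1}$ stays bounded. Hence $H(\lambda)\to+\infty$, and by the intermediate value theorem there is $\lambda_0>0$ with $H(\lambda_0)=0$; this $\lambda_0$ is a real eigenvalue of $A$ at $U^{2*}$, which therefore has a nontrivial unstable manifold and is unstable.

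The step I expect to be the main obstacle is covering the \emph{entire} upper branch $y\in(y^*,1)$. A continuation argument starting from the bifurcation point $y=y^*$ — where, by Lemma~\ref{lemmasimple} and the change of sign of $\det A$ across $y^*$ (visible from \eqref{detA}, since $p_N$ changes sign there), the simple zero eigenvalue moves into the right half-plane — only yields instability for $y$ slightly larger than $y^*$, and extending it to all of $(y^*,1)$ would require excluding Hopf crossings there, for which no counterpart of Lemma~\ref{imaginary} is available. The intermediate-value argument above is designed precisely to sidestep this: it produces an unstable real eigenvalue for each $y\in(y^*,1)$ separately, using only the sign of $p_N$ and the large-$\lambda$ asymptotics of $F_N$ and $G_N$. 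The remaining computations — the evaluation of \eqref{polycaract} at $y=0$, the identity $H(0)=y^{-N}p_N(y)$, and the boundedness estimate for $G_N$ as $\lambda\to+\infty$ — are routine.
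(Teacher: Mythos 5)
Your proposal is correct, and both halves take a genuinely different route from the paper. For the stable branch, the paper anchors the spectral analysis at the bifurcation point: it combines Lemma~\ref{lemmasimple} and Lemma~\ref{imaginary} with the explicit spectrum at $kr=0$ to control the spectrum at $y=y^*$, then uses the local expansion $\det(A-\lambda I)=p_N(y)+(2p_N(y)+d_1)\lambda+O(|\lambda|^2)$ (with $d_1(y^*)>0$) to see the zero eigenvalue move left for $y^{1*}$ near $y^*$, and finally continues along $(0,y^*)$. You instead anchor at $y=0$, where \eqref{polycaract} collapses to $(-1)^N(1+\lambda)^{N+4}(kr+\lambda)$, and continue up to $y^{1*}$; this is equally valid (the matrix entries are rational and continuous on $[0,1)$), still leans on Lemma~\ref{imaginary} and $\det A\neq 0$ for the no-crossing step, but entirely bypasses Lemma~\ref{lemmasimple} and the $d_1$ computation — at the cost of losing the local picture of how the eigenvalue splits off from $0$ at the fold. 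For the unstable branch the difference is more substantive: the paper perturbs from $y^*$ and asserts that the resulting positive eigenvalue ``cannot become nonpositive'' along $(y^*,1)$, a continuation for which, as Remark~\ref{remark} concedes, no analogue of Lemma~\ref{imaginary} is available; your intermediate-value argument $H(0)=y^{-N}p_N(y)<0$, $H(\lambda)\to+\infty$, applied separately at each $y\in(y^*,1)$, produces a positive real eigenvalue directly and so avoids that continuation altogether — it is the more robust (and arguably more complete) argument for instability, and your identities $H(0)=y^{-N}p_N(y)$ and the boundedness of $G_N$ as $\lambda\to+\infty$ check out against \eqref{detA} and the $\hat g(\Delta)$ estimates in the proof of Lemma~\ref{imaginary}.
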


\begin{proof}
Let us rewrite \eqref{polycaract} as follows:
\begin{align}\label{polycaractalt}
\det(A-\lambda I_{N+5})  =
& (-1)^N(1-y)^{-N-1}(1+\lambda)^3\lambda\Delta^{N+1}  - \\
& - kr(-1)^N(1-y)^{-N-1}(1+\lambda)^2\times \nonumber   \\
&\quad \times \biggl\{ (1+\lambda)\lambda\Delta^{N+1} - y^N\Big[\Delta^{N+1}+\nonumber   \\
& \quad\qquad+(1-y)\Big(\big(N(1-y)(1+\lambda)+1\big)(1+\dots+\Delta^N)-1\Big)\Big]\biggr\}.  \nonumber
\end{align}
Observe that, if $kr=0$, then  $\det(A-\lambda I_{N+5})  =  (-1)^N(1-y)^{-N-1}(1+\lambda)^3\lambda\Delta^{N+1}$, and
the eigenvalues of $A$ are $\lambda = 0$ (simple), $\lambda = -1$ (with algebraic multiplicity 3,) and $\lambda = -\frac{1}{1-y}$
(with algebraic multiplicity $N+1$.) Let $y=y^*$. Then, from Lemmas~\ref{lemmasimple} and \ref{imaginary}, for
every $kr>0$ the linearization of \eqref{NonlinearODE} around the equilibrium $U^*$ with $y=y^*$ has $N+4$ 
nonzero eigenvalues with negative real parts and the remaining eigenvalue $\lambda =0$ is simple.

\medskip

For $0<\alpha/r<\mu^*$  let $y^{1*}<y^*<y^{2*}$ be the only two values of 
$y$ that solve the bifurcation equation \eqref{bifeqy}. 
To these values of $y$  corresponds two equilibria of \eqref{NonlinearODE}: $U^{1*}$ 
(corresponding to $y^{1*}$) and $U^{2*}$ (corresponding to $y^{2*}.$)
By what was done previously, in particular from \eqref{Fprime}, \eqref{pNy}, \eqref{detA},  \eqref{cpdev}, \eqref{d0}, 
and \eqref{d1}, the jacobian matrix of the linearization of \eqref{NonlinearODE} around 
$U^{j*}$ has eigenvalues given by the solutions $\lambda$ of
\begin{equation}\label{spectraUj}
p_{N}(y^{j*})+ 
(2 p_{N}(y^{j*}) + d_1)\lambda + O(|\lambda|^2) = 0 
\quad\text{as $\lambda \to 0$},
\end{equation}
with $d_1=d_1(y^{j*})$ given by \eqref{d1}.
From the study of equilibria in \cite{cdg}, recalled in Section~\ref{sec2}, we know that
$p_N(y^{1*})>0$  for all $y^{1*}<y^*$,
 and  $p_N(y^{2*})<0$ for all $y^{2*}>y^*$.
From the proof above we have $d_1(y^*)>0$ and hence, by continuity,  for $y^{1*}$ and $y^{2*}$ sufficiently 
close to $y^*$  it still holds that 
$2 p_N(y^{j*}) + d_1(y^{j*}) >0.$
This implies that, for sufficiently small $kr>0$, equation \eqref{spectraUj} has a negative solution  
when $j=1$ and a positive solution when $j=2$. 

\medskip

Thus, from the argument above, the zero eigenvalue of the jacobian matrix at
the bifurcation point $y^*$ is perturbed to a negative eigenvalue for the
linearization about the equilibrium $U^{1*}$ when $y^{1*}$ is close to $y^*$.
By Lemma~\ref{imaginary} all the other eigenvalues of the Jacobian at $U^{1*}$ have negative real parts, and
since $\lambda=0$ is not an eigenvalue if $y$ is not equal to $y^*$, we conclude that for all equilibria $U^{1*}$ 
(not necessarily close to $U^*$) the real negative eigenvalue originated from $\lambda=0$ at the bifurcation point
cannot become nonnegative. Hence,
for all values of the parameters $\alpha/r\in (0, \mu^*)$, $kr>0$, the equilibrium  $U^{1*}$ of
\eqref{NonlinearODE} is locally exponentially asymptotically stable.

\medskip

As in the case  of $U^{1*}$ above, when $U^{2*}$ is a sufficiently small perturbation of $U^*$, the zero
eigenvalue of the corresponding jacobian is perturbed to a positive real eigenvalue, and, 
by continuity, all other eigenvalues have negative real parts if the perturbation is sufficiently small.
Also, this positive eigenvalue cannot become nonpositive if $y$ remains larger than $y^*$. This implies that,
for all values of the parameters $\alpha/r\in (0, \mu^*)$, $kr>0$, the equilibrium  $U^{2*}$ of
\eqref{NonlinearODE} is unstable. 

\medskip

This completes the proof of the theorem.
\end{proof}

\medskip

\begin{remark}\label{remark}
In the instability part of the previous proof  we establish that the eigenvalue 
of the jacobian matrix at $U^{2*}$ that becomes positive when $U^{2*}$ is a small perturbation of $U^*$
cannot become nonpositive for larger perturbations (i.e., for larger positive values of $y-y^*$).
However, note that for these equilibria with $y>y^*$ we could not prove
a result analogous to Lemma~\ref{imaginary} and so we cannot guarantee that, by changing the system's parameters,
one or more pairs of complex conjugated eigenvalues will not cross the imaginary axis
from left to right thus increasing the dimension of the unstable manifold. 
Numerical evidence, some presented in section~\ref{sec5}, lead us to conjecture that this is not the case. 
\end{remark}

\begin{conj}\label{conjonstability}
With the assumptions and notation of Theorem~\ref{stabilityatbifpoint} we have that
for all $\alpha/r\in (0,\mu^*)$ and all $kr>0$,
 the unstable manifold of all equilibria  
\( U^{2*} \) has dimension one.
\end{conj}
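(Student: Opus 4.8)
The plan is to prove the sharper, cleaner statement that the Jacobian $A=DF(U^{2*})$ has \emph{exactly one} eigenvalue in the open right half-plane for every $y\in(y^*,1)$ and every $kr>0$; then $U^{2*}$ is a hyperbolic equilibrium of \eqref{NonlinearODE} and the stable manifold theorem gives $\dim W^u(U^{2*})=1$. Since $\widetilde{\mathcal{F}}_N$ is a strictly decreasing bijection of $(y^*,1)$ onto $(0,\mu^*)$ (by the analysis recalled in Section~\ref{sec2}, because $p_N<0$ there), the corresponding $y^{2*}$ runs over all of $(y^*,1)$ as $\alpha/r$ runs over $(0,\mu^*)$, so this covers every equilibrium $U^{2*}$.

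First I would reduce the problem to a single spectral assertion. Let $n(y,kr)$ be the number of eigenvalues of $A$, counted with multiplicity, in the open right half-plane, viewed as a function on the connected set $\mathcal P:=(y^*,1)\times(0,\infty)$; since the coefficients of $\det(A-\lambda I_{N+5})$ are continuous in $(y,kr)$, $n$ is locally constant on any region of $\mathcal P$ free of eigenvalues on the imaginary axis. By \eqref{detA} and the fact that $p_N(y)\neq0$ for $y\neq y^*$, the value $\lambda=0$ is never an eigenvalue on $\mathcal P$. Hence the conjecture will follow from two facts: \textbf{(X)} $A$ has no eigenvalue $\lambda=i\omega$ with $\omega\in\Rb\setminus\{0\}$, for any $(y,kr)\in\mathcal P$; and \textbf{(Y)} $n$ equals $1$ at some point of $\mathcal P$. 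Fact \textbf{(Y)} is the easy one: for fixed $y\in(y^*,1)$ and $kr\to0^+$, \eqref{polycaractalt} shows the spectrum of $A$ converges to $\{0,-1,-1,-1,-1/(1-y),\dots,-1/(1-y)\}$, and by \eqref{cpdev}, \eqref{detA} and \eqref{d1} (note $d_1\to+\infty$ while $kr\,d_1\to1$ as $kr\to0^+$) the eigenvalue near $0$ is $\lambda\approx -kr\,p_N(y)>0$, since $p_N(y)<0$; the remaining eigenvalues stay near $-1$ or $-1/(1-y)$, so $n=1$ for $kr$ small, and by \textbf{(X)} and connectedness $n\equiv1$ on $\mathcal P$.

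The heart of the matter is thus \textbf{(X)}, the analogue of Lemma~\ref{imaginary} on $(y^*,1)$. Putting $\lambda=i\omega$ with real $\omega\neq0$ and $\beta:=1/(kr)>0$, the characteristic equation becomes, exactly as in the proof of Lemma~\ref{imaginary}, $F_N(i\omega,\beta,y)=G_N(i\omega,y)$ with $F_N(i\omega,\beta,y)=y^{-N}(1+i\beta\omega)$. For each fixed $\omega>0$ the set $\{F_N(i\omega,\beta,y):\beta>0\}$ is precisely the open vertical ray $\{y^{-N}+it:t>0\}$, so \textbf{(X)} is equivalent to the statement that \emph{for every $\omega>0$ with $\operatorname{Re}G_N(i\omega,y)=y^{-N}$ one has $\operatorname{Im}G_N(i\omega,y)\leqslant0$} (the case $\omega<0$ following by conjugation). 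With $\Delta=1+i\omega(1-y)$, so $|\Delta|>1$, and the decomposition from the proof of Lemma~\ref{imaginary},
\[
G_N(i\omega,y)=\bigl(1+(1-y)\Delta^{-1}\hat g(\Delta)\bigr)(1+i\omega)^{-1}+N(1-y)^2\bigl(\hat g(\Delta)+1\bigr)\Delta^{-1},\qquad \hat g(\Delta)=\sum_{i=1}^{N}\Delta^{-i},
\]
one computes $G_N(i0^+,y)=y^{-N}\bigl(1-p_N(y)\bigr)$, whose real part \emph{exceeds} $y^{-N}$ since $p_N(y)<0$, while $G_N(i\omega,y)\to0$ as $|\omega|\to\infty$; so, unlike the case $y\leqslant y^*$ (where $|F_N|>|G_N|$ throughout), the real part of $G_N$ does reach the critical value $y^{-N}$, and the whole question is the sign of $\operatorname{Im}G_N$ at such frequencies. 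A natural route is to prove that $\omega\mapsto\operatorname{Re}G_N(i\omega,y)$ is strictly decreasing on $\omega>0$ (so the critical frequency is unique) and then to estimate $\operatorname{Im}G_N$ there; an alternative is to control the argument of $G_N(i\omega,y)$ along the curve directly, or to read off the Hermite--Biehler pattern of the degree-$(N+3)$ polynomial obtained from the brace of \eqref{polycaract} after clearing the $(1-y)^{-N-1}$ denominator and splitting off the double root $\lambda=-1$.

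I expect this last step, the sign control of $\operatorname{Im}G_N$ on the level set $\{\operatorname{Re}G_N=y^{-N}\}$, to be the main obstacle: it is exactly where the modulus estimate of Lemma~\ref{imaginary} fails on $(y^*,1)$, which is presumably why the statement is only conjectured. The numerical experiments of Section~\ref{sec5} strongly suggest the inequality holds, but I do not see a short closed-form proof; a plausible plan is to establish \textbf{(X)} first for $kr$ small and for $y$ in one-sided neighbourhoods of $y^*$ and of $1$ (where $p_N(y)$, respectively $1-y$, is small), and then to close the remaining compact range of $\omega$ by a quantitative estimate.
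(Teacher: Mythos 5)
The statement you are asked to prove is stated in the paper only as a \emph{conjecture}: Remark~\ref{remark} explicitly says the authors could not prove an analogue of Lemma~\ref{imaginary} for $y>y^*$, and Section~\ref{sec5} offers only numerical evidence. Your proposal does not close that gap either, and you say so yourself. What you give is a (correct and useful) reduction: the preliminary steps are sound --- $\widetilde{\mathcal{F}}_N$ is indeed a decreasing bijection of $(y^*,1)$ onto $(0,\mu^*)$ since $p_N<0$ there; $\lambda=0$ is never an eigenvalue off the bifurcation point by \eqref{detA}; the count $n=1$ for small $kr$ follows from \eqref{polycaractalt} and the expansion \eqref{cpdev} exactly as in the instability half of Theorem~\ref{stabilityatbifpoint}; and the homotopy/connectedness argument correctly reduces everything to your statement \textbf{(X)}. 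Your reformulation of \textbf{(X)} as ``$\operatorname{Re}G_N(i\omega,y)=y^{-N}$ implies $\operatorname{Im}G_N(i\omega,y)\leqslant 0$'' is also correct, and your observation that $G_N(0,y)=y^{-N}(1-p_N(y))>y^{-N}$ while $G_N(i\omega,y)\to 0$ accurately explains why the modulus estimate of Lemma~\ref{imaginary} cannot work on $(y^*,1)$.

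The genuine gap is therefore \textbf{(X)} itself: you have not shown that $\operatorname{Im}G_N\leqslant 0$ on the level set $\{\omega>0:\operatorname{Re}G_N(i\omega,y)=y^{-N}\}$, nor that this level set is a single point, and the routes you sketch (monotonicity of $\operatorname{Re}G_N$, argument control, Hermite--Biehler) are proposals rather than arguments. Until one of them is carried out, the possibility that a complex-conjugate pair crosses the imaginary axis as $(y,kr)$ varies over $(y^*,1)\times(0,\infty)$ --- which is precisely the scenario Remark~\ref{remark} warns about --- is not excluded, and the dimension count for $W^u(U^{2*})$ does not follow. So the statement remains a conjecture; your contribution is a clean equivalent spectral formulation of it, not a proof.
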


\section{Local dynamics of the silicosis system \eqref{syst2}}\label{sec4}

\begin{theorem}\label{strongstability}
Let $\alpha/r<\mu^*$ and let $\widetilde{U}^{\text{eq}} =(x^{\text{eq}} , M_0^{\text{eq}}, M_1^{\text{eq}},\ldots) $ be an equilibrium solution of
\eqref{syst2} such that the corresponding equilibrium  of the $(N+5)$-dimensional system \eqref{NonlinearODE},
$U^{\text{eq}}=(U_1^{\text{eq}}, \ldots, U_{N+5}^{\text{eq}})$,
is locally exponentially
asymptotically stable. Then, $\widetilde{U}^{\text{eq}}$ is a locally asymptotically stable solution of  \eqref{syst2} in
the strong topology of $X.$
\end{theorem}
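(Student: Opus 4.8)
The plan is to leverage the decoupling established in Section~\ref{sec3}: if $\widetilde U(t)=(x(t),M_0(t),M_1(t),\dots)$ solves \eqref{syst2} and $u,v,w$ are the bulk variables \eqref{uvw}, then $U(t):=(x(t),u(t),v(t),w(t),M_0(t),\dots,M_N(t))$ solves the finite--dimensional system \eqref{NonlinearODE}, while for $i\geqslant N+1$ the $M_i$ obey the triangular cascade $\dot M_i=-(1+kx)M_i+kxM_{i-1}$ driven solely by $x$ and $M_N$. Since $\Ueq$ is, by hypothesis, locally exponentially asymptotically stable for \eqref{NonlinearODE}, $x(t)\to\xeq$ and $M_N(t)\to\Meq_N$ exponentially fast; it then remains to propagate this convergence down the cascade and control the weighted $\ell^1$--norm of the tail.

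First I would note that the map $\widetilde U(0)\mapsto U(0)$, where $u(0),v(0),w(0)$ are the sums \eqref{uvw} evaluated at $t=0$, is Lipschitz from $X$ into $\Rb^{N+5}$, because $\abs{u(0)-\ueq}$, $\abs{v(0)-\veq}$ and $\abs{w(0)-\weq}$ are each dominated by $\norm{\widetilde U(0)-\widetilde U^{\mathrm{eq}}}$ (using that $\norm{\cdot}$ weights $M_i$ by $i+1$). Hence a small perturbation of $\widetilde U^{\mathrm{eq}}$ in $X$ yields a small perturbation of $\Ueq$ in $\Rb^{N+5}$, and the assumed stability provides constants $C,\gamma,\delta_1>0$ with $\norm{U(t)-\Ueq}\leqslant Ce^{-\gamma t}\norm{U(0)-\Ueq}$ for all $t\geqslant0$ whenever $\norm{U(0)-\Ueq}<\delta_1$; in particular $\abs{x(t)-\xeq}$ and $\abs{M_N(t)-\Meq_N}$ are $O(e^{-\gamma t})$ with constants proportional to $\norm{\widetilde U(0)-\widetilde U^{\mathrm{eq}}}$, and $\sup_{t\geqslant0}\abs{x(t)}<\infty$.

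Next, setting $e_i:=M_i-\Meq_i$ and subtracting the equilibrium identity from the $i$-th equation of \eqref{syst2}, one gets for $i\geqslant N+1$
\[
\dot e_i=-(1+kx)e_i+kx\,e_{i-1}+k(x-\xeq)(\Meq_{i-1}-\Meq_i).
\]
Writing $T(t):=\sum_{i\geqslant N+1}(i+1)\abs{e_i(t)}$ — which is finite and can be differentiated term by term by the uniform convergence estimates from \cite{cps} invoked in Section~\ref{sec3} — using $\tfrac{d}{dt}\abs{e_i}\leqslant\sgn(e_i)\dot e_i$ together with $\sum_{i\geqslant N+1}(i+1)\abs{e_{i-1}}=(N+2)\abs{e_N}+\sum_{i\geqslant N+1}(i+2)\abs{e_i}$ and $\sum_{i\geqslant N+1}\abs{e_i}\leqslant\tfrac{1}{N+2}T$, one arrives at
\[
\dot T\leqslant-\Bigl(1-\tfrac{kx(t)}{N+2}\Bigr)T+kx(t)(N+2)\abs{e_N(t)}+k\abs{x(t)-\xeq}\,K,
\]
with $K:=\sum_{i\geqslant N+1}(i+1)\abs{\Meq_{i-1}-\Meq_i}<\infty$ since $\Meq_i=r(1-y)y^i$ decays geometrically. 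The decisive point is the sign of $1-kx(t)/(N+2)$: because $\Ueq$ is stable it lies on the lower branch, so by the analysis recalled in Section~\ref{sec2} its parameter satisfies $y<y^*<\tfrac{N+1}{N+2}$, i.e. $k\xeq=\tfrac{y}{1-y}<N+1<N+2$; consequently $1-kx(t)/(N+2)\geqslant c>0$ for all $t\geqslant0$ once the initial perturbation is small enough that $x(t)$ stays in a fixed neighbourhood of $\xeq$. Since $\abs{e_N(t)}$ and $\abs{x(t)-\xeq}$ are $O(e^{-\gamma t})$, Gr\"onwall's inequality then yields $T(t)\leqslant\text{const}\cdot\norm{\widetilde U(0)-\widetilde U^{\mathrm{eq}}}\,e^{-\min(c,\gamma)t}$ (with an extra linear factor in the borderline case $c=\gamma$); in particular $T(t)\to0$ and $\sup_{t\geqslant0}T(t)\lesssim\norm{\widetilde U(0)-\widetilde U^{\mathrm{eq}}}$.

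Finally, since $\norm{\widetilde U(t)-\widetilde U^{\mathrm{eq}}}=\abs{x(t)-\xeq}+\sum_{i=0}^N(i+1)\abs{M_i(t)-\Meq_i}+T(t)$, bounding the first two groups by the finite--dimensional estimate and the last by the tail estimate gives both Lyapunov stability and attractivity — indeed exponential attractivity — in the norm of $X$, which is the assertion. The main obstacle, and where essentially all the work is, is the tail estimate: one must guarantee that the weighted $\ell^1$--norm of the tail error cannot grow, which is exactly where $k\xeq<N+2$ — equivalently $y<\tfrac{N+1}{N+2}$ on the stable branch — is indispensable, since otherwise the transfer of macrophages into ever higher cohorts could outpace their removal and the weighted norm would fail to contract. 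A related point to handle carefully is that the exponential convergence of the finite--dimensional part must be used in its uniform (Lyapunov) form, so that $x(t)$ remains in the window $k\xeq<N+2$ for \emph{all} $t\geqslant0$ and Gr\"onwall can be run from $t=0$ rather than from a data--dependent time.
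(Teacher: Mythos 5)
Your argument is correct, but it takes a genuinely different route from the paper's at the decisive step. The paper never estimates the tail at all: it observes that for a nonnegative solution the $X$-norm collapses to the finite expression $\|\widetilde{U}(t)\|=x(t)+u(t)+v(t)+\sum_{i=0}^{N}iM_i(t)$, so convergence of the norm $\|\widetilde{U}(t)\|\to\|\widetilde{U}^{\text{eq}}\|$ is an immediate consequence of the convergence of the reduced system \eqref{NonlinearODE}; combined with componentwise convergence (obtained, as in your proposal, by propagating $x\to x^{\text{eq}}$, $M_N\to M_N^{\text{eq}}$ down the cascade) and the standard ``weak-$*$ plus norm convergence implies strong convergence'' lemma for nonnegative sequences (\cite[Lemma 3.3]{bcp}), this yields strong convergence in $X$ with essentially no computation. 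You instead run a direct weighted-$\ell^1$ tail estimate: your differential inequality for $T(t)=\sum_{i\geqslant N+1}(i+1)|e_i|$ is correct, and the structural input $kx^{\text{eq}}=y/(1-y)<N+1<N+2$ (valid because the stable equilibrium has $y<y^*<\tfrac{N+1}{N+2}$) is exactly what makes the contraction coefficient negative — this is a nontrivial extra ingredient the paper's Theorem~\ref{strongstability} does not need. What your approach buys is that it proves exponential convergence in the $X$-norm in one pass, i.e.\ it subsumes Theorem~\ref{expstrongstability}, which the paper establishes separately by variation of constants on the Jordan-block cascade (and, notably, without any restriction on $kx^{\text{eq}}$, since the explicit matrix exponential $e^{J\tau}$ controls the subdiagonal coupling more sharply than your one-step differential inequality). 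Two small points to tighten: the term-by-term differentiation of $T$ (or, more safely, an integrated Gr\"onwall inequality applied to the partial sums $\sum_{i=N+1}^{m}(i+1)|e_i|$ followed by $m\to\infty$) deserves an explicit justification beyond citing \cite{cps}; and the inequality $\sgn(e_i)\,kx\,e_{i-1}\leqslant kx|e_{i-1}|$ uses $x(t)\geqslant 0$, so the restriction to nonnegative solutions in $X_+$ (implicit in the paper's proof as well) should be stated.
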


\begin{proof}
Remember that the silicosis system \eqref{syst2} is 
equivalent to the infinite system \eqref{syst3} with restrictions \eqref{uvw}. 
To every point $\widetilde{U}=(x,M_0, M_1, \ldots)\in X_+$ there corresponds a unique 
$U = (x,u,v,w,M_0, \ldots, M_N)\in\Rb^{N+5}_+$.
By what was done in section~\ref{sec2} we
know that there exists an open set $\Omega\subset \Rb^{N+5}$ containing $U^{\text{eq}}$ such that for
every initial condition in $\Omega$ the corresponding solution of the 
$(N+5)$-dimensional system \eqref{NonlinearODE} converges to $U^{\text{eq}}$ when $t\to+\infty.$ 
In particular, for those initial conditions, we have that $x(t)\to x^{\text{eq}}$ and  $M_i(t)\to M_i^{\text{eq}}$ as 
$t\to+\infty$ for all $i=0, \ldots, N.$
Using this  in the equations  in \eqref{syst2} for $M_i$ with $i > N$ we conclude that all components 
of the solution $\widetilde{U}=(x, M_0, M_1, \ldots)$ 
converge exponentially
to the corresponding components of $\widetilde{U}^{\text{eq}}=(x^{\text{eq}}, M_0^{\text{eq}}, M_1^{\text{eq}},\ldots)$ 
when $t\to +\infty$.

\medskip

Observe that, from the definition of the variables $u$ and $v$ in \eqref{uvw}, if
 $\widetilde{U}=(x, M_0, M_1, \ldots)$ is a nonnegative
solution of \eqref{syst2} in $[0,+\infty)$, then, for all $t\geqs 0$, the norm of $\widetilde{U}(t)$ 
can be written in the form
\begin{equation}
\| \widetilde{U}(t)\| = x(t) + u(t) + v(t) + \sum_{i=0}^NiM_i(t).\label{normU}
\end{equation}
Let $B_\varepsilon\subset X_+$ be an open ball of radius $\varepsilon$ centered at the equilibrium
$\widetilde{U}^{\text{eq}}.$ Take an initial condition $\widetilde{U}(0)\in B_\varepsilon$. Then, since
\[
\Bigl|\| \widetilde{U}(0)\| - \|\widetilde{U}^{\text{eq}}\|\Bigr| \leqs \|\widetilde{U}(0)-\widetilde{U}^{\text{eq}}\| < \varepsilon,
\]
the equality \eqref{normU} with $t=0$ implies that, if we choose $\varepsilon$ small enough, the corresponding
initial condition $U(0)$ for the $(N+5)$-dimensional system \eqref{NonlinearODE} will be in $\Omega.$

\medskip

Hence, for small enough $\varepsilon$, to every initial condition $\widetilde{U}(0) \in 
B_\varepsilon\subset X_+$  corresponds a vector
$U(0) \in \Rb^{N+5}_+$  in $\Omega$, and so, the solution $\widetilde{U}(\cdot)$ of
\eqref{syst2} satisfies
$\|\widetilde{U}(t)\| \to \|\widetilde{U}^{\text{eq}}\|$ exponentially as $t\to +\infty$. 
This, together
with the componentwise convergence of $\widetilde{U}$ to $\widetilde{U}^{\text{eq}}$, implies, by a standard result 
(see, e.g., \cite[Lemma 3.3]{bcp}), 
 that $\widetilde{U}(t)\to \widetilde{U}^{\text{eq}}$ strongly in $X$ as $t\to +\infty.$\end{proof}

\medskip

\begin{theorem}\label{expstrongstability}
Under the assumptions of Theorem~\ref{strongstability} the solutions of  \eqref{syst2} that converge
in the strong topology of $X$  to the 
locally asymptotically stable solution $\widetilde{U}^{\text{eq}}$   do so at an exponential rate.
\end{theorem}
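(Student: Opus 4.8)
The plan is to upgrade the convergence obtained in the proof of Theorem~\ref{strongstability} from merely asymptotic to exponential by tracking the decay rates carefully through the same chain of implications. The key observation is that local exponential asymptotic stability of $U^{\text{eq}}$ for the finite system \eqref{NonlinearODE} means that there exist constants $C>0$ and $\gamma>0$ (with $\gamma$ smaller than the spectral gap of $A=DF(U^{\text{eq}})$) such that $\|U(t)-U^{\text{eq}}\|_{\Rb^{N+5}} \leqs C e^{-\gamma t}\|U(0)-U^{\text{eq}}\|_{\Rb^{N+5}}$ for all initial conditions in a suitable neighbourhood $\Omega$. In particular $|x(t)-x^{\text{eq}}|$ and $|M_i(t)-M_i^{\text{eq}}|$ for $0\leqs i\leqs N$ all decay at rate at least $e^{-\gamma t}$, and so does $|v(t)-v^{\text{eq}}|$, which is one of the components of $U$.

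First I would fix such constants $C,\gamma$ and, shrinking $\gamma$ if necessary, assume $\gamma<1$ (the decay rate of the $u$ and $M_i$ equations' linear parts) and $\gamma < \tfrac{1}{1-y^{\text{eq}}}$. Next I would propagate the exponential decay to the tail cohorts $M_i$, $i>N$, by induction on $i$: from the equation $\dot{M}_i = -M_i - kxM_i + kxM_{i-1}$ in \eqref{syst2}, writing $e_i(t):=M_i(t)-M_i^{\text{eq}}$ and using $kx^{\text{eq}}M_{i-1}^{\text{eq}} = (1+kx^{\text{eq}})M_i^{\text{eq}}$ at equilibrium, one obtains a linear scalar ODE for $e_i$ with a forcing term built from $e_{i-1}$ and $x-x^{\text{eq}}$; a variation-of-constants estimate, using that the homogeneous decay rate $1+kx(t)$ is bounded below by a positive constant for $t$ large (since $x(t)\to x^{\text{eq}}$), then gives $|e_i(t)|\leqs C_i e^{-\gamma t}$ with constants $C_i$ that can be controlled, e.g.\ growing at most geometrically in $i$ with ratio $kx^{\text{eq}}/(1+kx^{\text{eq}}-\gamma)=y^{\text{eq}}\cdot\frac{1+kx^{\text{eq}}}{1+kx^{\text{eq}}-\gamma}<1$ for $\gamma$ small enough; this is exactly the computation already alluded to (but not quantified) in the proof of Theorem~\ref{strongstability}.

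Finally I would assemble the norm estimate. Using the representation \eqref{normU} together with the analogous identity for $\widetilde{U}^{\text{eq}}$, and the triangle inequality in $X$, one has
\[
\|\widetilde{U}(t)-\widetilde{U}^{\text{eq}}\| \leqs |x(t)-x^{\text{eq}}| + \sum_{i=0}^\infty (i+1)|M_i(t)-M_i^{\text{eq}}|,
\]
so it suffices to bound $\sum_{i=0}^\infty (i+1)|e_i(t)|$. Splitting into $i\leqs N$ (finitely many exponentially-decaying terms, handled by the finite-dimensional estimate) and $i\geqs N+1$ (where $(i+1)|e_i(t)|\leqs (i+1)C_i e^{-\gamma t}$ and $\sum_i (i+1)C_i$ converges because the $C_i$ decay geometrically), one gets $\|\widetilde{U}(t)-\widetilde{U}^{\text{eq}}\|\leqs \widetilde{C} e^{-\gamma t}$, which is the claim.

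The main obstacle is making the induction on the tail uniform, i.e.\ choosing the decay rate $\gamma$ and the constants $C_i$ so that the series $\sum (i+1)C_i$ actually converges; this requires that the ratio governing the geometric growth of $C_i$ stay strictly below $1$, which forces $\gamma$ to be chosen small enough relative to $1+kx^{\text{eq}}$ — a constraint that is harmless for the qualitative statement but must be stated carefully. A secondary technical point is that the homogeneous coefficient $1+kx(t)$ in the $M_i$ equation is only asymptotically equal to $1+kx^{\text{eq}}$, so the variation-of-constants estimates should be run from a time $t_0$ after which $x(t)$ is within a small fixed tolerance of $x^{\text{eq}}$, absorbing the transient $[0,t_0]$ into the constants.
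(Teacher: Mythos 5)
Your proposal is correct in substance and rests on the same mechanism as the paper's proof: the tail equations for $e_i=M_i-M_i^{\text{eq}}$ form a lower--triangular cascade driven by exponentially decaying forcing (coming from the finite-dimensional exponential stability of \eqref{NonlinearODE}), and one propagates the decay through the cascade by variation of constants. The difference is one of bookkeeping. You run a scalar induction in $i$, tracking constants $C_i$ and then summing $\sum_i(i+1)C_i$; the paper instead packages the whole cascade into a Jordan block $J_{n+1}$ with constant diagonal $-(1+\beta)$ (after rescaling time by $kx^{\text{eq}}$), uses the explicit form of $e^{J_{n+1}\tau}$ to get the clean $\ell^1$ bound $\|e^{J_{n+1}\tau}\|_{\ell^1\to\ell^1}\leqs e^{-(1+\beta)\tau}e^{\tau}=e^{-\beta\tau}$, and handles the weight $(i+1)$ by writing a second cascade for $\xi_i=i\delta_i$. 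The paper's route avoids having to choose $\gamma$ small and yields the rate $\min\{\beta,\eta\}$ directly; your route is more elementary but requires the care you already flagged about the geometric ratio. One small imprecision to fix: the $C_i$ do not literally decay geometrically, because the recursion is $C_i\leqs a_i+\rho\,C_{i-1}$ with $\rho=kx^{\text{eq}}/(1+kx^{\text{eq}}-\gamma)<1$, where $a_i$ collects $|e_i(0)|$ and the forcing amplitudes proportional to $M_i+M_{i-1}$; these $a_i$ are only summable (with weight $i+1$, since the data and the solution lie in $X$), not geometric. What saves the argument is that the discrete convolution of a $(i+1)$-weighted summable sequence with a geometric one is again $(i+1)$-weighted summable, so $\sum_i(i+1)C_i<\infty$ still holds; you should state the estimate this way rather than claiming geometric decay of the $C_i$ themselves.
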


\begin{proof}
Let $\widetilde{U}(t)\to \widetilde{U}^{\text{eq}}$ in $X$, as $t\to +\infty$.
We know that each component of $\widetilde{U}(t)$ converges exponentially to the corresponding
component of $\widetilde{U}^{\text{eq}}$. To prove the theorem we need to show 
that $\|\widetilde{U}(t)-\widetilde{U}^{\text{eq}}\|$ converges exponentially fast to zero as $t\to +\infty$.
First, we have to prove the same holds for the $\ell^1$ norm.

\medskip

From \eqref{syst2}, we obtain 
\[
\frac{d}{dt}(M_0-M_0^{\text{eq}}) =  - (1+kx^{\text{eq}})(M_i-M_i^{\text{eq}}) - kM_0(x-x^{\text{eq}}),
\]
and, for each \(i\geq 1\),
\begin{align*}
\frac{d}{dt}(M_i-M_i^{\text{eq}})&= - (1+kx^{\text{eq}})(M_i-M_i^{\text{eq}})\\
&+kx^{\text{eq}}(M_{i-1}-M_{i-1}^{\text{eq}}) - k(M_i-M_{i-1})(x-x^{\text{eq}}).
\end{align*}
For each $t>0$ and integer $i\geqslant 0$ define
\[
\delta_i(t):=M_i(t)-M_i^{\text{eq}}(t),\qquad 
\varphi_i(t):=\bigl(x^{\text{eq}}\bigr)^{-1}M_i(t)\bigl(x(t)- x^{\text{eq}}\bigr).
\]
Changing the time variable $t\mapsto\eqref{NonlinearODE}$, denoting by $(\cdot)'$ the derivarive $\frac{d}{d\tau}$, and
defining $\beta := \frac{1}{kx^{\text{eq}}}$,
the system above can be written as
\begin{equation}\label{dotdelta0}
\delta'_0=-(1+\beta)\delta_0 - \varphi_0(\tau)\,
\end{equation}
and
\begin{equation}\label{dotdeltai}
\delta'_i=-(1+\beta)\delta_i+\delta_{i-1} - \varphi_i(\tau) + \varphi_{i-1}(\tau),\qquad i=1,2,\dots
\end{equation}

\medskip

Note that system \eqref{dotdelta0}-\eqref{dotdeltai} can be solved recursively, starting with the equation for $\delta_0$
and then sequentially for $\delta_i$ for $i=1, 2, \ldots$, since the equation for $\delta_i$ only depends on the
components of the solutions with $j\leqslant i$. So, consider the $(n+1)$-dimensional system for the vector
of displacements {\boldmath${\delta}$}$_n=(\delta_0, \delta_1, \ldots, \delta_n)^{\text{\sf T}},$
\begin{equation}\label{nsystemdelta}
\text{\boldmath${\delta}$}_n' = J_{n+1}
\text{\boldmath${\delta}$}_n + \text{\boldmath${\Phi}$}_n(\tau),
\end{equation}
where $\text{\boldmath${\Phi}$}_n  = (\Phi_i)_{i=0,\ldots,n}^{\text{\sf T}}$ with $\Phi_0= -\varphi_0$ and $\Phi_i=-\varphi_i+\varphi_{i-1}$ if $i\geqs 1$, and
$J_{n+1}$ is the $(n+1)$-dimensional Jordan matrix
\begin{equation}\label{Jmatrix}
J_{n+1} := \begin{bmatrix} 
- (1+\beta) &  &  &  &  \\
1 & - (1+\beta) & & &  \\
   & 1 & - (1+\beta) & & \\
   & & \ddots & \ddots & \\
   & & & 1 & - (1+\beta)
\end{bmatrix}. 
\end{equation}

The solution of \eqref{nsystemdelta} is given by the variation of constants formula
\begin{equation}
\text{\boldmath${\delta}$}_n(\tau) = e^{J_{n+1}\tau}\text{\boldmath${\delta}$}_n(0) + 
\int_0^\tau e^{J_{n+1}(\tau - s)}\text{\boldmath${\Phi}$}_n(s)ds \label{varconst}
\end{equation}
and we now estimate each of the terms in the right-hand side of this expression separately.

\medskip

For the first term in the right-hand side of \eqref{varconst} we have
\begin{align}
e^{J_{n+1}\tau}\text{\boldmath${\delta}$}_n(0) 
& = e^{-(1+\beta)\tau}
\begin{bmatrix}
1 & 0 & 0 &  \cdots & 0 \\
\tau & 1 & 0 & \cdots & 0 \\
\frac{\tau^2}{2!} & \tau & 1 & \cdots & 0 \\
\vdots & \vdots & \vdots &  \ddots & \vdots \\
\frac{\tau^n}{n!} & \frac{\tau^{n-1}}{(n-1)!} & \frac{\tau^{n-2}}{(n-2)!} & \ldots & 1
\end{bmatrix}
\begin{bmatrix}
\delta_0(0) \\ \delta_1(0) \\ \delta_2(0) \\ \vdots \\ \delta_n(0)
\end{bmatrix} \nonumber \\
&  = e^{-(1+\beta)\tau}
\begin{bmatrix}
\delta_0(0)  \\
\tau\delta_0(0)  + \delta_1(0)  \\
\frac{\tau^2}{2!} \delta_0(0)  + \tau \delta_1(0) + \delta_2(0)  \\
\vdots \\ 
\frac{\tau^n}{n!} \delta_0(0)  + \frac{\tau^n}{n!} \delta_0(0) + \ldots + \tau \delta_{n-1}(0) + \delta_n(0) \nonumber
\end{bmatrix}
\end{align}
and hence
\begin{align}
\bigl\|e^{J_{n+1}\tau}\text{\boldmath${\delta}$}_n(0)\bigr\|_{\ell^1} 
& \leqs e^{-(1+\beta)\tau}\sum_{j=0}^n\Bigl(\sum_{k=0}^j\frac{\tau^k}{k!}\left|\delta_{j-k}(0)\right|\Bigr) \nonumber \\
& = e^{-(1+\beta)\tau}\sum_{k=0}^n\frac{\tau^k}{k!}\sum_{p=0}^{n-k}\left|\delta_{p}(0)\right| \nonumber \\
& \leqs e^{-(1+\beta)\tau}e^{\tau}\left\|\text{\boldmath${\delta}$}(0)\right\|_{\ell^1}  
=  e^{-\beta\tau}\left\|\text{\boldmath${\delta}$}(0)\right\|_{\ell^1},\label{bound1}
\end{align}
where $\text{\boldmath${\delta}$}(0) := (\delta_0(0), \delta_1(0), \delta_2(0), \ldots)^{\text{\sf T}}.$

\medskip

For the second term in the right-hand side of \eqref{varconst} we can write
\begin{eqnarray}
\lefteqn{\int_0^\tau e^{J_{n+1}(\tau - s)}\text{\boldmath${\Phi}$}_n(s)ds =}\\
& = & \int_0^\tau  e^{-(1+\beta)(\tau-s)}
\begin{bmatrix}
1 & 0 & 0 & \cdots & 0 \\
(\tau-s) & 1 & 0 & \cdots & 0 \\
\frac{(\tau-s)^2}{2!} & (\tau-s) & 1 & \cdots & 0 \\
\vdots & \vdots & \vdots &  \ddots & \vdots \\
\frac{(\tau-s)^n}{n!} & \frac{(\tau-s)^{n-1}}{(n-1)!} & \frac{(\tau-s)^{n-2}}{(n-2)!} & \ldots & 1
\end{bmatrix}
\begin{bmatrix}
\Phi_0(s) \\ \Phi_1(s) \\ \Phi_2(s)  \\ \vdots \\ \Phi_n(s))
\end{bmatrix} ds \nonumber \\
&  = & \int_0^\tau  e^{-(1+\beta)(\tau-s)}
\begin{bmatrix}
\Phi_0(s)   \\
(\tau - s)\Phi_0(s)   + \Phi_1(s)   \\
\frac{(\tau - s)^2}{2!}\Phi_0(s) + (\tau - s)\Phi_1(s)   + \Phi_2(s)  \\
\vdots \\ 
\frac{(\tau - s)u^n}{n!} \Phi_0(s)  + \frac{(\tau - s)^n}{n!} \Phi_1(s) + \ldots + (\tau - s) \Phi_{n-1}(s) + \Phi_n(s) \nonumber
\end{bmatrix}
ds,
\end{eqnarray}
and hence
\begin{align}
\left\|\int_0^\tau e^{J_{n+1}(\tau - s)}\text{\boldmath${\Phi}$}_n(s)ds\right\|_{\ell^1} 
& \leqs\int_0^\tau e^{-(1+\beta)(\tau - s)}\sum_{j=0}^n\sum_{k=0}^j\frac{(\tau-s)^k}{k!}\left|\Phi_{j-k}(0)\right|ds \nonumber \\
& = \int_0^\tau e^{-(1+\beta)(\tau - s)}\sum_{k=0}^n\frac{(\tau-s)^k}{k!}\sum_{p=0}^{n-k}\left|\Phi_{p}(0)\right|ds \nonumber \\
& \leqs \int_0^\tau e^{-(1+\beta)(\tau - s)}e^{\tau-s}\left\|\text{\boldmath${\Phi}$}(s)\right\|_{\ell^1} ds \nonumber \\
& =  \int_0^\tau e^{-\beta(\tau - s)}\left\|\text{\boldmath${\Phi}$}(s)\right\|_{\ell^1} ds, \label{bound2}
\end{align}
where $\text{\boldmath${\Phi}$} := (\Phi_i)_{i\in\Nb_0}^{\text{\sf T}}.$ To estimate $\left\|\text{\boldmath${\Phi}$}(s)\right\|_{\ell^1}$ observe that, because we have, for each $t\geqs 0$, $\widetilde{U} = (x, M_0, M_1, \ldots) \in X \subset \ell^1$, and
each component converges exponentially to the corresponding component of the limit equilibrium $\widetilde{U}^{\text{eq}}$,
and thus, in particular, $|x(\tau) - x^\text{eq}| \leqs C_1e^{-\eta \tau}$ for some $C_1, \eta >0,$ and all $\tau >0$, so that we get
\begin{equation}
\left|\Phi_0(s)\right| = \left| -\varphi_0(s)\right| = \frac{1}{x^\text{eq}}\left| M_0(s)\right| \,\left|x(s)-x^\text{eq}\right| 
\leqs \frac{C_1}{x^\text{eq}}e^{-\eta s}\left|M_0(s)\right|, \label{boundphi0}
\end{equation}
and, for $i\geqs 1,$
\begin{align}
\left|\Phi_i(s)\right| & =  \left| -\varphi_i(s) + \varphi_{i-1}\right| \,\leqs \, \left| \varphi_i(s) \right| + \left| \varphi_{i-1}\right| \nonumber \\
& \leqs   \frac{C_1}{x^\text{eq}}e^{-\eta s}\Bigl(\left|M_i(s)\right| + \left|M_{i-1}(s)\right| \Bigr). \label{boundphii}
\end{align}
Thus
\begin{align}
\left\|\text{\boldmath${\Phi}$}(s)\right\|_{\ell^1}  & = \sum_{i=0}^\infty \left|\Phi_i(s)\right| 
\leqs  \frac{2C_1}{x^\text{eq}}\bigl\|\widetilde{U} (s)\bigr\|_{\ell^1}e^{-\eta s}
\leqs  \frac{C_2}{x^\text{eq}}e^{-\eta s}, \label{boundPhi}
\end{align}
where $C_2 \geqs 2C_1 \max_{s\geqs 0}\Bigl\{\bigl\|\widetilde{U} (s)\bigr\|, \bigl\|\widetilde{U}^\text{eq}\bigr\|\Bigr\},$
and the maximum exists by the result about convergence in Theorem~\ref{strongstability}.
Hence, plugging \eqref{boundPhi} into \eqref{bound2}, we conclude that
\begin{align}
\left\|\int_0^\tau e^{J_{n+1}(\tau - s)}\text{\boldmath${\Phi}$}_n(s)ds\right\|_{\ell^1} 
& \leqs 
\begin{cases}
C_2\tau e^{-\beta \tau}, & \text{if $\eta=\beta$} \\  \frac{C_2}{|\beta-\eta|}e^{-\min\{\eta, \beta\}\tau}, &\text{if $\eta\neq\beta$,} \label{bound3}
\end{cases}
\end{align}
which, together with \eqref{bound1}, allow us to write, for all $\tau \geqs 0,$
\begin{equation}
\left\|\text{\boldmath${\delta}$}(\tau)\right\|_{\ell^1} \leqs 
 e^{-\beta\tau}\left\|\text{\boldmath${\delta}$}(0)\right\|_{\ell^1} + 
\begin{cases}
C_2\tau e^{-\beta \tau}, & \text{if $\eta=\beta$} \\  \frac{C_2}{|\beta-\eta|}e^{-\min\{\eta, \beta\}\tau}, &\text{if $\eta\neq\beta$.} \label{bound4}
\end{cases}
\end{equation}

\medskip

Let us now consider convergence in the norm of $X.$ Since
\begin{equation}
\|\text{\boldmath${\delta}$}(\tau)\| = \left|x(\tau)-x^\text{eq}\right| + \sum_{i=0}^\infty (i+1)\left|\delta_i(\tau)\right|
= \left|x(\tau)-x^\text{eq}\right| + \left\|\text{\boldmath${\delta}$}(\tau)\right\|_{\ell^1} + \left\|\text{\boldmath${\xi}$}(\tau)\right\|_{\ell^1},  \label{deltanormX}
\end{equation}
where $\text{\boldmath${\xi}$} = (\xi_i) :=  (i\delta_i).$ Multiplying \eqref{dotdeltai} by $i$ we get the system for $\xi_i$:
\begin{equation}\label{dotxii}
\xi'_i=-(1+\beta)\xi_i+\xi_{i-1} + \Psi_i(\tau),\qquad i=1,2,\dots\nonumber
\end{equation}
where $\Psi_i(\tau) :=\delta_{i-1}(\tau) + i\Phi_i(\tau),$ for $i\geqs 1$, and $\xi_0(\tau)\equiv 0.$
Again, like  \eqref{dotdeltai} this system can be solved recursively for $i=1, 2, \ldots,$ because the equation
for $\xi_i$ only depends on information with $j\leqs i,$ and so, similarly to what was done before, 
we can consider a finite $n$-dimensional for the vector $\text{\boldmath${\xi}$}_n = (\xi_1, \ldots, \xi_n)^{\text{\sf T}},$
\begin{equation}
\text{\boldmath${\xi}$}_n' = J_{n}\text{\boldmath${\xi}$}_n  + \text{\boldmath${\Psi}$}_n(\tau),\nonumber
\end{equation}
where $\text{\boldmath${\Psi}$}_n  = (\Psi_i)_{i=1,\ldots,n}^{\text{\sf T}}$, and
$J_{n}$ is the $n$-dimensional Jordan matrix with the form \eqref{Jmatrix}.
Now computations analogous to those done previously give the following decay estimate for 
$\left\|\text{\boldmath${\xi}$}(\tau)\right\|_{\ell^1}$ for $\tau$ sufficiently large: 
\begin{align}
\left\|\text{\boldmath${\xi}$}(\tau)\right\|_{\ell^1} &  \leqs
 e^{-\beta\tau}\left\|\text{\boldmath${\xi}$}(0)\right\|_{\ell^1} + 
\begin{cases}
C_3 \tau e^{-\beta \tau}, & \text{if $\eta>\beta$} \\  
C_4 \tau^2 e^{-\beta\tau}, &\text{if $\eta =\beta$.} \\  
C_5 e^{-\eta\tau}, &\text{if $\eta<\beta$,} \nonumber 
\end{cases}
\end{align}
where the constants $C_j$ are independent of $\tau$.
This, together with \eqref{bound4}, the exponential
decay bound for $ \left|x(\tau)-x^\text{eq}\right|,$ and \eqref{deltanormX}, allow us to conclude
that $\left\|\text{\boldmath${\delta}$}(\tau)\right\|$
converge exponentially fast to zero when $\tau\to +\infty$
which, recalling that $\tau=kx^\text{eq}t$,  proves the theorem.
\end{proof}

%
%

\section{Numerical explorations}\label{sec5}

In this section we present some of the numerical evidence illustrating the eigenvalues of 
the jacobian matrices $DF$ computed at the equilibria of \eqref{NonlinearODE}, 
for several values of the parameters $\alpha/r$ and $kr$, and for some dimensions $N+5$ of the system.

\medskip

The evidence presented illustrates properties described in Lemmas~\ref{lemmasimple} and \ref{imaginary}
and support Conjecture~\ref{conjonstability}. 

\medskip

The first evidence
consists in the plots of the numerical computed eigenvalues of the Jacobian $DF(U^*)$ of 
\eqref{NonlinearODE} at the bifurcation point $U^*$, when $\alpha/r = \mu^*$. We present in Figure~\ref{N10}
the spectra of this matrix for the system with $N=10$ (hence with dimension $N+5=15$) for several values of $kr$ from $0$ to 
$10^5$. The eigenvalues corresponding to small values of $kr$ are ploted in light gray and  cases with larger values of $kr$
become progressively darker. The spectra in the case of $kr=10^5$ is represented by the black dots. Note the existence of a
(black) point at the origin: this corresponds to the zero eigenvalue, whose existence
and simplicity, for all $kr$, was established in Lemma~\ref{lemmasimple}.

%
%
%
\begin{figure}[!h]
	\includegraphics[scale=0.85]{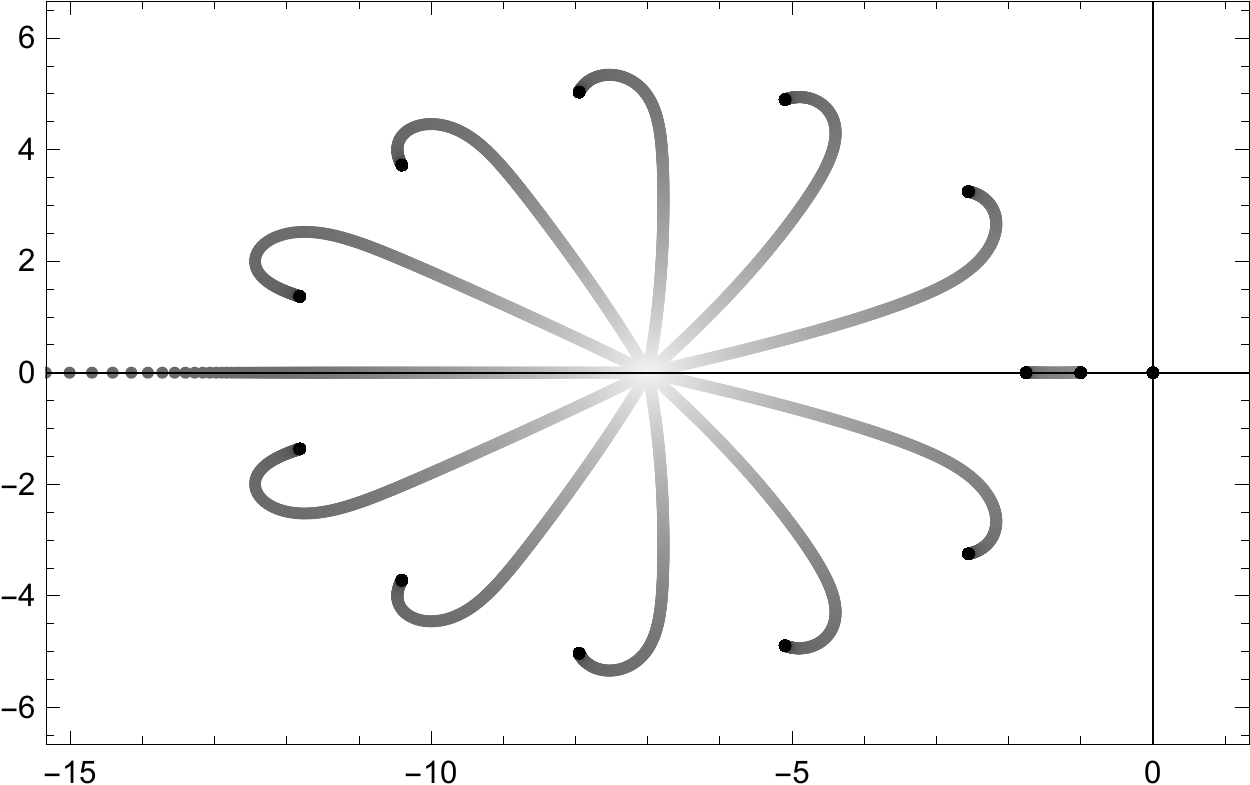}
	\caption{Plot of the eigenvalues of the jacobians $DF(U^*)$ with $N=10$, 
$kr$ from $0$ (light gray) to $10^5$ (black). The real eigenvalue with largest absolute value
gets out of the chosen window for $kr$ large enough.}\label{N10}
\end{figure}
%
%
%

\medskip

 In Figure~\ref{N25} the same plot is presented for the case $N=25$ and $kr$ from $0$ to 
$10$. In both cases it is clear that except for the 
zero eigenvalue, all other eigenvalues have negative real parts and
seem to remain bounded away from the imaginary axis when $kr$ increases.
Other experiments, for other values of $N,$ exhibit the same behaviour.


%
%
%
\begin{figure}[!h]\includegraphics[scale=0.85]{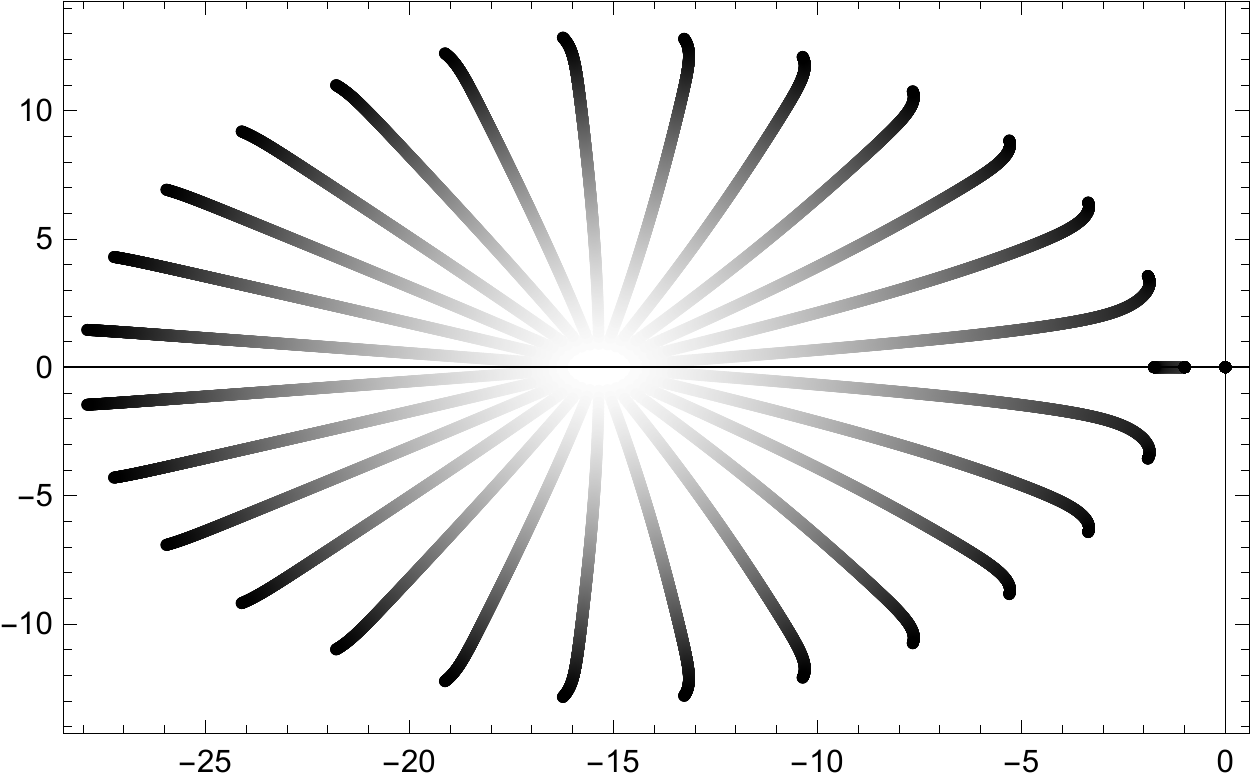}
	\caption{Plot of the eigenvalues of the jacobians $DF(U^*)$ with $N=25$, 
$kr$ from $0$ (light gray) to $10$ (black).}\label{N25}
\end{figure}
%
%
%

If $\alpha/r < \mu^*$ the corresponding experiments for the spectra of the jacobians $DF(U^{j*})$
about the two equilibria $U^{j*}$, with $j=1, 2$ (using
the notation of Theorem~\ref{stabilityatbifpoint}), shows a similar behaviour, except for the eigenvalue which was 
zero in the previous case (when $\alpha/r=\mu^*$) and is now negative for $j=1$ and positive for $j=2$. This is illustrated 
in Figures~\ref{N10_ymenor} and~\ref{N10_ymaior}. Observe
that in Figure~\ref{N10_ymaior} the eigenvalue that is zero when $kr=0$ becomes real positive when $kr>0$ but hardly moves 
at all. This behaviour is shown more clearly in Figure~\ref{estabilidade}.

%
%
%
\begin{figure}[!h]
	\includegraphics[scale=0.85]{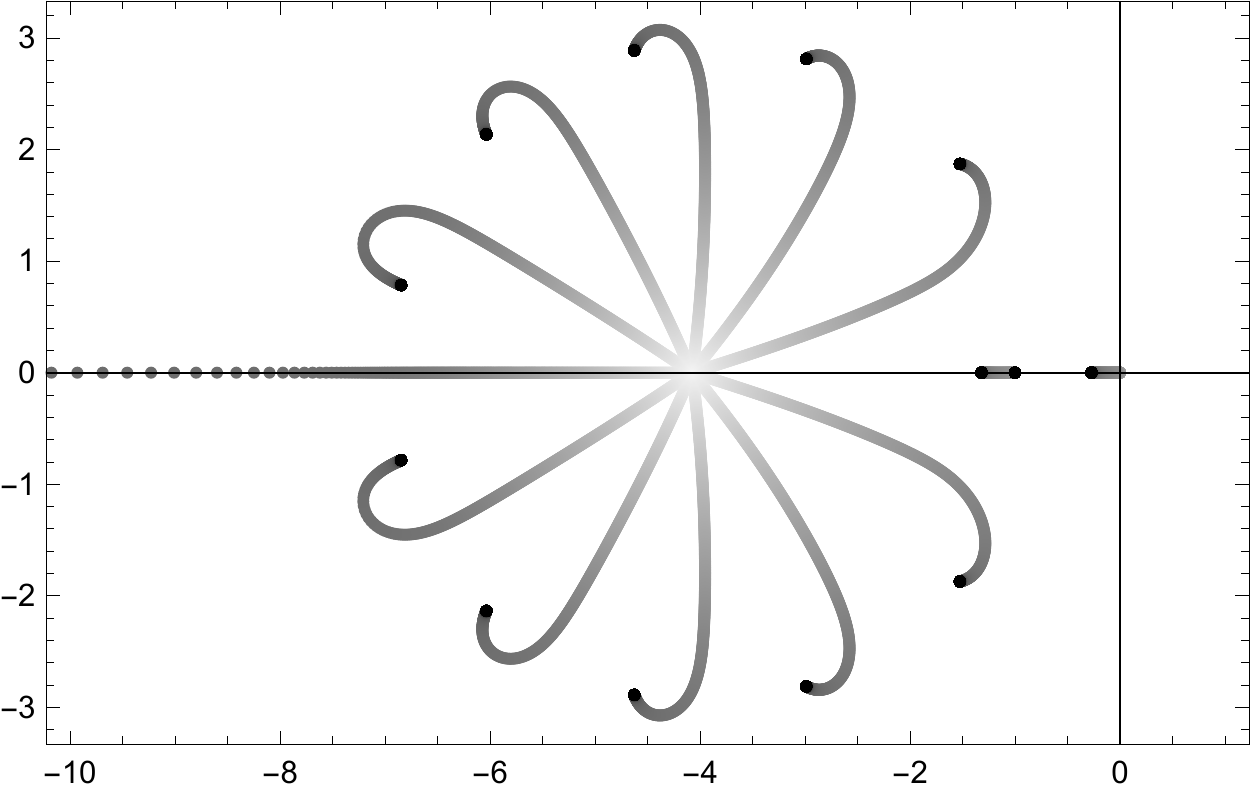}
	\caption{Plot of the eigenvalues of the jacobians $DF(U^{1*})$ with $N=10$, $\alpha/r =2.44$,  
$kr$ from $0$ (light gray) to $10^5$ (black). Observe the eigenvalue close to the origin starts at the origin when $kr=0$
and moves slowly to the left half plane as $kr$ increases. The real eigenvalue of largest absolute value
gets out of the chosen window for $kr$ large enough.}\label{N10_ymenor}
\end{figure}
%
%
%

%
%
%
\begin{figure}[!h]
	\includegraphics[scale=0.85]{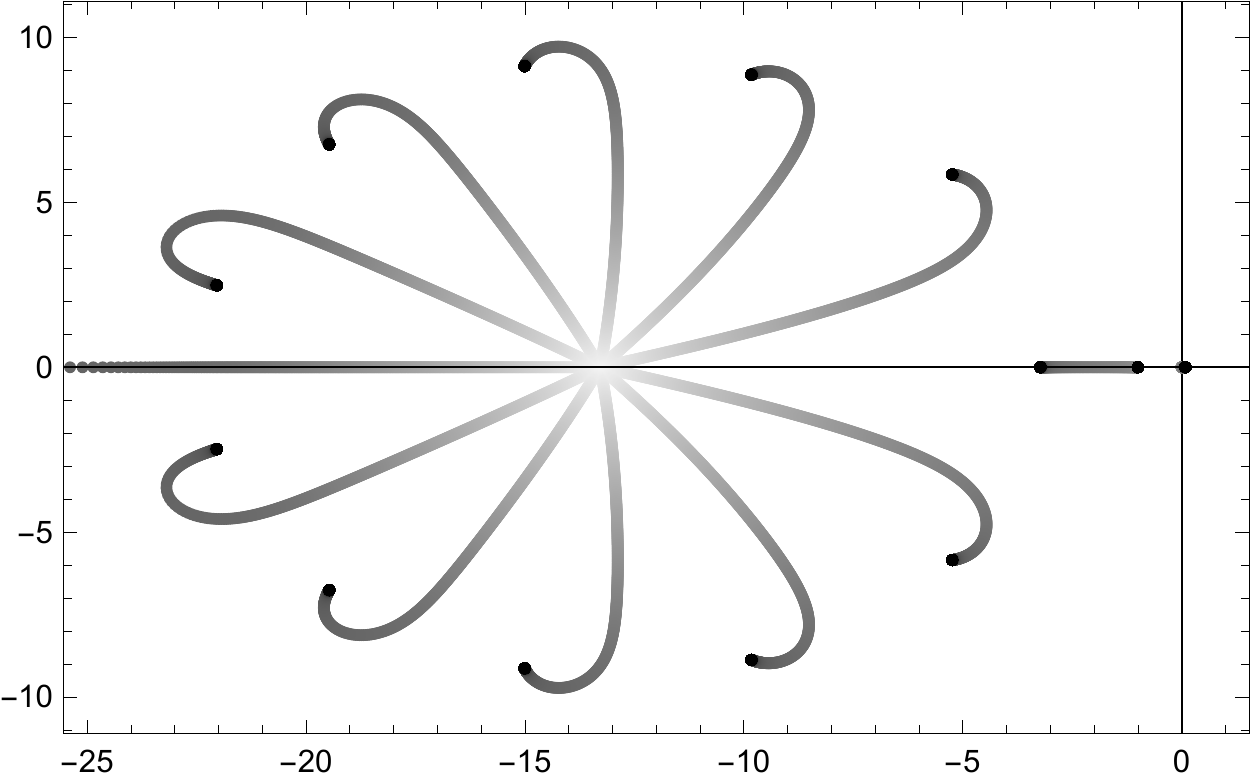}
	\caption{Plot of the eigenvalues of the jacobians $DF(U^{2*})$ with $N=10$, $\alpha/r = 2.44$,  
$kr$ from $0$ (light gray) to $10^5$ (black). Observe the eigenvalue close to the origin starts at the origin when $kr=0$
and moves slowly to the right half plane as $kr$ increases. The real eigenvalue of largest absolute value
gets out of the chosen window for $kr$ large enough.}\label{N10_ymaior}
\end{figure}
%
%
%

From all numerical experiments performed it is clear that for all $N$ and $kr$, except for the eigenvalue that is
equal to zero at the bifurcation value $\mu^*$, all other eigenvalues of the jacobians $DF(U^{j*})$ have negative 
real parts bounded away from zero.

\medskip

The final plot, in Figure~\ref{estabilidade}, plots, in a
window with $y^{j*}$ between 0.75 and 0.98, the values of the eigenvalue
of $DF(U^{j*})$ that is zero at the bifurcation point $(\mu^*,y^*)$ when $kr\in [0, 100].$
Superimposed to the graph we plot lines highlighting those eigenvalues for values of $y$ at the equilibria $U^{1*}$ and $U^{2*}$
for values of $\alpha/r$ equal to  $2.746$ (dotted line) and $2.436$ (dashed lines). The full line
is the value of $y^*$ of the critical equilibrium $U^*$, which corresponds to $\alpha/r = \mu^*\approx 2.881.$
Observe that the eigenvalues change very steeply from the zero eigenvalue when $kr$ is very close to $0$ but then they
remain essentially independent of $kr$ and never stray very far from the origin, as have already been
observed in Figures~\ref{N10_ymenor} and \ref{N10_ymaior}.

%
%
%
\begin{figure}[!h]
	\includegraphics[scale=0.65]{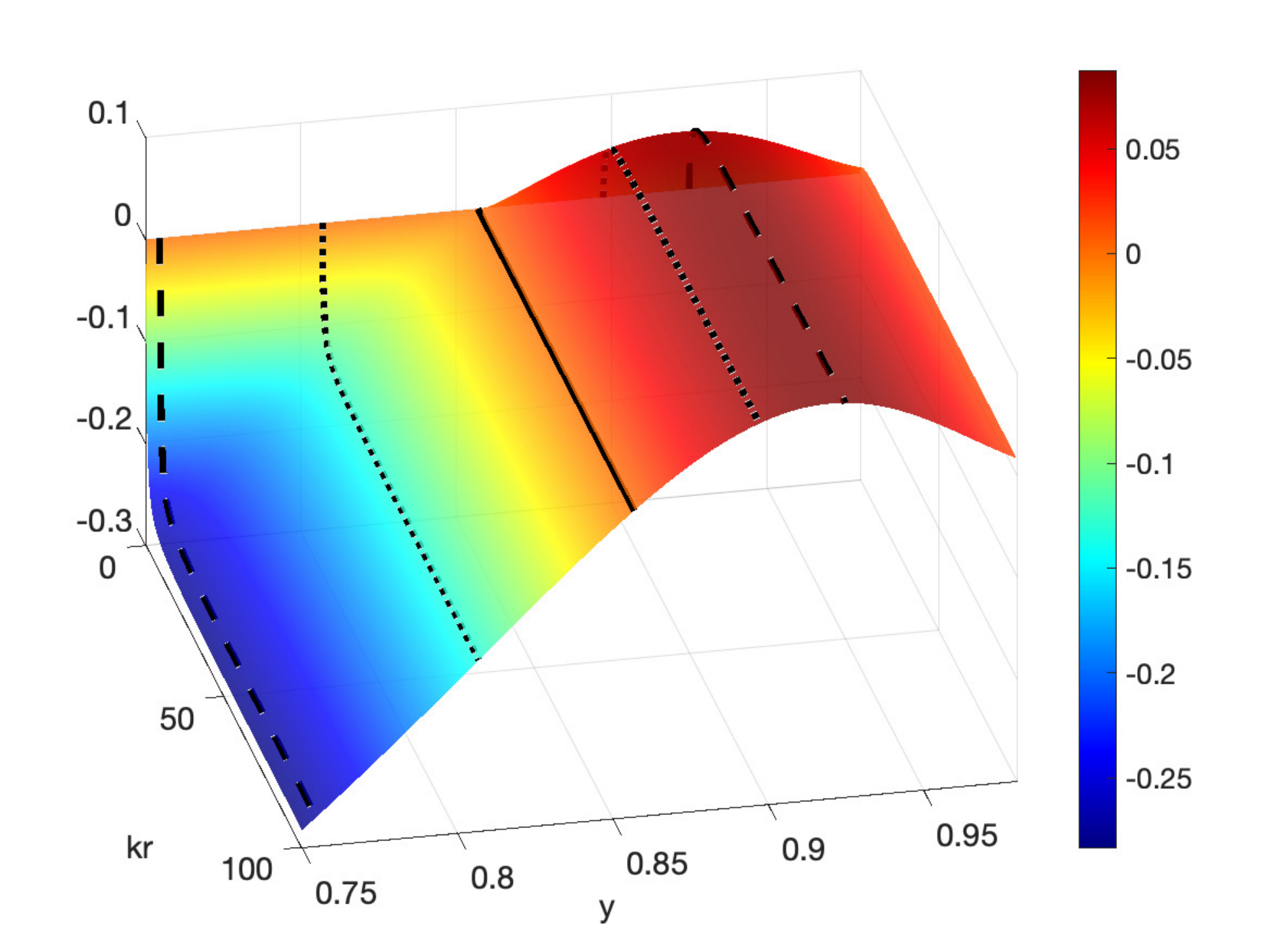}
	\caption{Plot of the eigenvalue of the jacobians $DF(U^{j*})$ with smaller absolute value,
when $N=10$,  $kr\in [0, 100],$ and $y$ in $[0.75, 0.98]$.
Superimposed to the graph we plot lines showing the values of $y$ at the equilibria $U^{1*}$ and $U^{2*}$
for values of $\alpha/r$ equal to  $2.746$ (dotted line) and $2.436$ (dashed lines). The full line
is the value of $y^*$ of the critical equilibrium $U^*$, which corresponds to $\alpha/r = \mu^*\approx 2.881.$
For any given value of $\alpha/r<\mu^*$ the line corresponding to $U^{1*}$ is always to the left of that of $U^{2*}.$}\label{estabilidade}
\end{figure}
%
%
%

%
%

\section{Discussion}\label{sec6}

In this paper we studied the local stability of equilibria of the model \eqref{syst2} for the silicosis disease, which is a
particular case of a more general model \eqref{M0eq1}--\eqref{xeq3} when the
special class of piecewise constant parameters \eqref{coef} is considered.

\medskip

With these assumptions it was known from \cite{cdg} that the balance between the input rates of silica and of new
macrophages, $\alpha$ and $r$ respectively, determined the existence (when $\alpha/r$ is below a certain threshold $\mu^*$)
or non-existence (when it is above) of equilibria of the infinite dimensional system \eqref{syst2}, as presented in the bifurcation 
diagram in Figure~\ref{figbif}.

\medskip

In this paper we proved that, for each $\alpha/r$ below the critical value $\mu^*$, the equilibrium with
smaller value of $x$ is a locally exponentially asymptotically stable solution of \eqref{syst2} in the strong
topology of the space $X\subset\ell^1$ of sequences with finite number of particles per unit volume introduced
in \cite{cps}. We prove also that the equilibrium solutions with larger value of $x$ are unstable.

\medskip

This stability result is proved by considering an appropriate change of variables \eqref{uvw} that allows us to write
\eqref{syst2} in the form \eqref{syst3} in which a closed finite dimensional subsystem can be identified. The analysis of the
eigenvalues of the linearizations of this finite dimensional system about the equilibria is the
crucial step to conclude the stability results for the original infinite dimensional model.

\medskip

To biologically interpret this result we observe that, having a constant input rate $\alpha$ of silica particles into the system, the
only way the system can converge to a non-negative 
steady state (with a finite concentration $x^{\text{eq}}$ of silica particles) is
if the mechanism eliminating silica particles by transporting them  inside 
the macrophages through the mucociliary escalator off the respiratory system
is highly efficient. From the results in this paper, this can only occur in this model if both the following conditions hold: (i) 
the rate of input of macrophages $r$ is sufficiently large compared with the input
rate of silica $\alpha$ (so that $\alpha/r$ is below the threshold $\mu^*$), and (ii) the initial load of silica in the
system is sufficiently small, so that the initial condition is inside the attraction basin of the asymptotically stable equilibrium.
If at least one of these conditions fails to hold, then solutions to \eqref{syst2} do not converge to an equilibrium (which
do not even exist if (i) fails). The rigorous
study of what happens in those cases is still lacking. However, preliminary
numerical studies (not presented in this paper) suggest that,
in those cases, solutions are such that $x(t)$ increase without bound. This unbounded increase in the amount of
silica dust in the respiratory system is the way this model expresses the fatal run off of the amount of crystalline quartz dust
in the lungs leading to death.

\medskip

It is an interesting mathematical open problem to study this run off regime and to investigate if it corresponds to
some self-similar regime, as is the case in other types of coagulation equations with inputs \cite{crw,cs,ffv}.

\medskip

Other mathematically interesting open problems arise by considering 
systems \eqref{M0eq1}--\eqref{xeq3} with more general rate coefficients $k_i, p_i$ and $q_i$,
in particular those satisfying power laws in the variable $i$ considered in \cite{cdg}. The study of those systems
will require a more precise
enquiry into the exact number of equilibria than was achieved in \cite{cdg} and, likely, a different way to attack
the stability problem in the infinite dimensional system \eqref{M0eq1}--\eqref{xeq3}, 
as the trick of using a change of variables to decouple
the system into a closed finite dimensional subsystem determining the dynamics is unlikely to be applicable in the general case.
However, based on the results about the structure of equilibria proved in \cite{cdg}, 
we expect the results in this paper 
to extend to systems with more general coefficients satisfying  power law assumptions.

%
%
\bibliographystyle{amsplain}

\end{document}